\newif\ifprivate
\newtheorem{theorem}{Theorem}
\newtheorem{corollary}{Corollary}
\newtheorem{lemma}{Lemma}
\newtheorem{claim}{Claim}
\newtheorem{observation}{Observation}
\theoremstyle{definition}
\newtheorem{definition}{Definition}
\theoremstyle{remark}
\newtheorem{remark}{Remark}
\newcommand{\inst}{\ensuremath{I}}
\newcommand{\yes}{``Yes''}
\newcommand{\no}{``No''}
\newcommand{\N}{\mathbb{N}}
\newcommand{\cocl}[1]{\ensuremath{\operatorname{#1}}}
\newcommand{\NP}{\cocl{NP}}
\newcommand{\coNP}{\cocl{coNP}}
\newcommand{\poly}{\cocl{poly}}
\newcommand{\NPincoNPslashpoly}{\ensuremath{\NP\subseteq \coNP/\poly}}
\renewcommand{\O}{\ensuremath{{O}}}
\newcommand{\vareps}{\varepsilon}
\newcommand{\dgn}{\ensuremath{p}}
\newcommand{\oct}{\ensuremath{p}}
\newcommand{\prob}[1]{\textsc{#1}}
\newcommand{\Clique}{\prob{Clique}}
\newcommand{\ceil}[1]{\lceil#1\rceil}
\newcommand{\TK}{Turing kernel}
\newcommand{\ORK}{OR kernel}
\newcommand{\xbar}[1]{\overline{#1}}
\newcommandx{\set}[2][1=1]{\ensuremath{\{#1,\dots,#2\}}}
\newcommandx{\tlog}[3][1=,3=]{\ensuremath{\log_{#1}^{#3}(#2)}}
\newcommand{\ceq}{\coloneqq}
\newcommand{\tikzinst}[4]{
  \node (#1) at (#2,#3)[rounded corners,fill=lightgray!33!white,minimum width=\xr*1.25cm,minimum height=\yr*0.75cm,draw,dashed,font=\small]{#4};
}
\newcommandx{\tikzpinst}[5][5=2.5]{
  \node (#1) at (#2,#3)[rounded corners,fill=lightgray!33!white,minimum width=\xr*#5 cm,minimum height=\yr*0.75cm,draw,font=\small]{#4};
}
\title{Polynomial Turing Kernels for Clique\\ with an Optimal Number of Queries}
\author[1]{Till Fluschnik\thanks{Supported by DFG, projects TORE (NI 369/18) and MATE (NI 369/17).}}
\author[1]{Klaus Heeger\thanks{Supported by DFG RGT 2434 ``Facets of Complexity''.}}
\author[2]{Danny Hermelin}
\affil[1]{Technische Universit\"at Berlin, Algorithmics and Computational Complexity}
\affil[2]{Ben-Gurion University of the Negev, Industrial Engineering and Management}
\begin{document}

\maketitle
\sloppy

\begin{abstract}
A polynomial Turing kernel for some parameterized problem $P$ is a polynomial-time algorithm that solves $P$ using queries to an oracle of $P$ whose sizes are upper-bounded by some polynomial in the parameter. Here the term ``polynomial" refers to the bound on the query sizes, as the running time of any kernel is required to be polynomial. One of the most important open goals in parameterized complexity is to understand the applicability and limitations of polynomial Turing Kernels. As any fixed-parameter tractable problem admits a Turing kernel of some size, the focus has mostly being on determining which problems admit such kernels whose query sizes can be indeed bounded by some polynomial.  

In this paper we take a different approach, and instead focus on the number of queries that a Turing kernel uses, assuming it is restricted to using only polynomial sized queries. Our study focuses on one the main problems studied in parameterized complexity, the \Clique{} problem: Given a graph $G$ and an integer $k$, determine whether there are $k$ pairwise adjacent vertices in $G$. We show that \Clique{} parameterized by several structural parameters exhibits the following phenomena: 
\begin{itemize}
\item It admits polynomial Turing kernels which use a sublinear number of queries, namely $O(n/ \tlog{n}[c])$ queries where $n$ is the total size of the graph and $c$ is any constant. This holds even for a very restrictive type of Turing kernels which we call OR-kernels.
\item It does not admit polynomial Turing kernels which use $O(n^{1-\varepsilon})$ queries, unless $\NP \subseteq \coNP/\poly$.
\end{itemize}
For proving the second item above, we develop a new framework for bounding the number of queries needed by polynomial Turing kernels. This framework is inspired by the standard lower bounds framework for Karp kernels, and while it is quite similar, it still requires some novel ideas to allow its extension to the Turing setting. 
\end{abstract}

\section{Introduction}

One of the most important goals in parameterized complexity research these days is to understand the applicability and limitations of Turing kernels. A kernel, in rough terms, is a polynomial-time reduction from a problem onto itself where the size of the output is bounded by a function in the parameter of the input (called the \emph{size} of the kernel).
Generally speaking, the reduction can be either of Karp-type or Turing-type, where in the latter case one uses the term \emph{Turing kernel} to emphasize the use of the weaker notion of reduction.
Thus, a Turing kernel for a parameterized problem $\Pi$ is an algorithm that solves $\Pi$ in polynomial time using queries to $\Pi$ of size bounded in the parameter of the input instance.
\begin{definition}[Turing kernel]
A \emph{Turing kernel} for a parameterized problem $L \subseteq \{0,1\}^* \times \mathbb{N}$ is an algorithm that has access to an oracle of $L$, such that on input $(x,p) \in \{0,1\}^* \times \N$, the algorithm correctly determines in polynomial-time whether $(x,p) \in L$ using only queries ``$(y,p') \in L$?'' with $|y|+p' \leq f(p)$, for some computable function $f$. The function $f$ is referred to as the \emph{size} of the kernel. 
\end{definition}

Not only is kernelization one of the most powerful techniques in parameterized complexity, 
it also provides an alternative way of formalizing the central concept of tractability in the area~\cite{DowneyF13}. For this reason, kernelization has received wide attention and has been the focus of extensive research~\cite{Kratsch14,LokshtanovMS12}. 
However, much more is known on Karp kernels. 
In particular, 
the recent book on kernelization~\cite{FominLSZ2019} is mostly devoted to techniques for constructing Karp kernels of polynomial size. 
Moreover, 
there is also a widely applicable framework for showing that certain problems do not admit Karp kernels of any polynomial size~\cite{BodlaenderDFH09}, 
and even of a specific polynomial size~\cite{DellM14}, 
assuming the widely believed assumption that $\coNP\nsubseteq\NP/\poly$. 
Both of these fact combined made Karp kernels of polynomial size the golden standard in kernelization. 

In contrast, the situation is quite different for Turing kernels. 
First off, there are only a handful of known problems which admit polynomial-size Turing kernels (yet do not admit polynomial-size Karp kernels). 
Notable examples are \textsc{Max Leaf Subtree}~\cite{BFFLSV12} and \textsc{Planar Longest Cycle}~\cite{Jansen17} under the standard parameterizations of these problems, 
and \prob{Clique} parameterized by the vertex cover number~\cite{BodlaenderJK14}. 
On the other hand, 
there is only a rather limited framework for excluding Turing kernels of polynomial size which is based on non-standard complexity-theoretic assumptions~\cite{HermelinKSWW15} or diagonalization~\cite{WitteveenBT19}.
This might be attributed to the fact that $\coNP=\NP$ (and so is a subset of $\NP/\poly$) under Turing reductions, 
and so the standard implication for excluding Karp kernels might not apply here. 
Thus, one of the central open problems in parameterized complexity is to establish a framework for excluding Turing kernels of polynomial size under standard complexity assumptions.  

\subsection{An \ORK{} for Clique}

To illustrate the discrepancy between Karp and Turing kernels let us look at the classical \prob{Clique} problem: 
Given a graph $G$ on $n_G$ vertices, 
and a positive integer $k$,
determine whether there are $k$ pairwise adjacent vertices in $G$. 
Consider this problem when the parameter is $\Delta$, the maximum degree of $G$. Then it is trivial to show that this problem has no polynomial Karp kernel using the standard lower bound framework~\cite{BodlaenderDFH09}. However, \prob{Clique} parameterized by $\Delta$ admits an easy and practical Turing kernel of polynomial size: Simply query the closed neighborhood of each vertex in the input graph to search for the desired clique.
This kernel has $n_G$ queries, 
and each query is of size~$O(\Delta^2)$ (here we count the total number of vertices and edges in the query).

The example above is actually a very restricted version of a Turing kernel which we refer to in this paper as an \emph{\ORK{}}. In particular, this algorithm is \emph{non-adaptive} in the sense that each query is independent of the answers to the previous queries. Thus, the queries can be constructed in advance without prior knowledge to the answer of each query. We use the term ``\ORK{}" to emphasize the fact that the algorithm functions similar to a logical OR gate in that the algorithm answers ``Yes'' if and only if the answer to any one of the queries is ``Yes'' (the entire graph has a clique of size $k$ if and only if one of the closed neighborhoods in the graph has a clique of size $k$). 
\begin{definition}[OR-kernel]
\label{def:ORkernel}
An \emph{\ORK{}} for a parameterized problem $L \subseteq \{0,1\}^* \times \mathbb{N}$ is an algorithm that on a given  $(x,p) \in \{0,1\}^* \times \mathbb{N}$, outputs in polynomial time a sequence of instances $(y_1,p_1),\ldots,(y_t,p_t) \in \{0,1\}^* \times \mathbb{N}$ such that:
\begin{itemize}
\item $\max_{i\in\set{t}} |y_i|+p_i \leq f(p)$ for some function $f$, and  
\item $(x,p) \in L$ if and only if $(y_i,p_i) \in L$ for some $i\in\set{t}$. 
\end{itemize}
The function $f$ is referred to as the \emph{size} of the kernel, and $t=t(n,p)$ is the number of queries used by the kernel.
\end{definition}

\ORK{}s have various advantages over Turing kernels. First of all, their non-adaptive nature allows to process the queries in parallel or in a distributed fashion. This is an even more crucial aspect when considering the prevalence of cloud computation these days.
In such settings, it makes sense to not only optimize the size $f$ of the desired kernel, as is done when considering Karp kernels, but also the number of queries $t$, as
there is typically a cost involved for running extra machines (physical or virtual). This is also to true, albeit to a lesser extent, with general Turing kernels. 

\subsection{Our contribution}

In this paper we abandon the pursuit of developing a framework for the exclusion of polynomial-size Turing kernels under standard complexity assumptions. Instead, we focus on developing lower bounds for the number of queries used by such kernels; that is, Turing kernels that are restricted to have only polynomial-size queries. Thus, going back to the example above, we wish to answer questions of the following type:
\begin{quote}
\emph{Does \prob{Clique} parameterized by the maximum degree $\Delta$ admit a polynomial-size \TK{} with $o(n)$ queries?}    
\end{quote}

It turns out that the answer to this question is positive, even when restricted to OR-kernels.
In particular, the \prob{Clique} problem has specific properties that allow us to employ a 
a novel \emph{query batching} technique, where we query several instances together as a single query. This technique allows us to shave off a polylogarithmic factor from the total number of queries for several \ORK{}s for \prob{Clique}, under various parameterizations (see also \cref{fig:paramhier}). 
\begin{theorem}
\label{thm:UpperBound}
For every~$c \in \mathbb{R}$, \textsc{Clique} admits a polynomial \ORK{} with $O(n/\tlog{n}[c])$ queries\footnote{Throughout the paper, we use $n$ to denote the total encoding length of the input graph, and not just its number of vertices; see Section~\ref{sec:pre} for the precise details.}, for all the following parameters: 
Bandwidth, 
cutdwidth, 
degeneracy, 
distance to bounded degree, 
edge bipartization, 
feedback vertex set, 
$k$ + distance to chordal, 
$k$ + distance to interval, 
maximum degree, 
odd cycle traversal, 
treedepth,
vertex cover number,
length of the longest odd cycle,
pathwidth, and 
treewidth.   
\end{theorem}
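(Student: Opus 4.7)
The plan has three ingredients: (i) for each listed parameter $p$, a \emph{basic} OR-kernel with $O(n)$ queries each of size $\poly(p)$; (ii) a direct polynomial-time algorithm for \Clique{} when $p = O(\log n)$; and (iii) a \emph{query batching} trick that merges several basic queries into one still-polynomial-sized query. Combining these yields the claimed bound of $O(n/\tlog{n}[c])$ queries.

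For ingredient (i), the task is to enumerate, for each parameter, $O(n)$ candidate subgraphs of $G$ of size $\poly(p)$ such that any $k$-clique of $G$ lies in at least one of them; the OR-kernel then queries each candidate separately. Concretely: closed neighborhoods (resp.\ right-neighborhoods in a degeneracy ordering) for maximum degree and degeneracy; bags of a tree/path decomposition for treewidth, pathwidth, treedepth, bandwidth, and cutwidth (since any clique lies in a single bag); the modulator together with each outside vertex for vertex cover, feedback vertex set, odd cycle traversal, edge bipartization, and distance to bounded degree; modulator-based enumerations that exploit polynomial-time maximum clique on chordal/interval graphs for the distance-to-chordal/interval parameters combined with $k$; and a bipartite-modulator enumeration for the longest odd cycle $\ell$, using the bound $k \le \ell + 1$.

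The batching step rests on the crucial property that a disjoint union $H \ceq G_1 \sqcup \cdots \sqcup G_B$ contains a $k$-clique iff some $G_i$ does, together with the fact that each listed parameter either is preserved under disjoint union (the degree-, decomposition-, and odd-cycle-based parameters) or grows by at most a factor of $B$ (the modulator-based parameters). Setting $B \ceq \tlog{n}[c]$, a batch of $B$ basic queries of size $\poly(p)$ combines into a single query of size $B \cdot \poly(p)$, which remains $\poly(p)$ whenever $p \geq \log n$ (so that $B \leq p^c$). In the complementary regime $p = O(\log n)$, \Clique{} admits a direct algorithm of running time $2^{O(p)} \cdot n^{O(1)}$ for every listed parameter (in particular via color-coding for the odd-cycle parameter, where $k \leq \ell + 1$), which is polynomial in $n$; we then output a single trivial ``Yes''- or ``No''-instance as the sole query. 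Partitioning the $O(n)$ basic queries into batches of size $\tlog{n}[c]$ yields the desired $O(n/\tlog{n}[c])$ query count.

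The main obstacle is constructing the basic OR-kernel for the less standard parameters, in particular the distance-to-chordal/interval parameters (where including the clique size $k$ in the parameter is essential to keep query sizes polynomial) and the longest-odd-cycle parameter (where bipartite-subgraph structure must be combined with a modulator enumeration). Once a basic kernel is in place, the batching step itself is uniform and depends only on the benign behaviour of the parameters under disjoint union.
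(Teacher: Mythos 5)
Your approach matches the paper's: build a basic polynomial-size OR-kernel with $O(n)$ queries for each parameter, solve \prob{Clique} directly in polynomial time when the parameter is small, and otherwise batch $\tlog{n}[c]$ queries into a single disjoint-union query (the paper formalizes this as \cref{lem:logshave} via a ``trivial OR composition''). The one technical detail to watch: the threshold for switching to the direct algorithm should be $p \leq \log^{1/d} n$ (with $d$ the exponent in the query-size bound), not $p \leq \log n$, since for several parameters (e.g.\ odd cycle transversal or distance-to-chordal, where only an approximate modulator is available) the direct algorithm runs in $2^{\poly(p)}$ rather than $2^{O(p)}$ time; the batching step still goes through in the complementary regime because $\log n < p^d$ there.
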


While shaving off polylogarithmic factors from the total number of queries is a substantial improvement, one might naturally ask whether we could do even better. For this, we develop a novel framework for excluding polynomial-size \TK{}s with with $O(n^{1-\vareps})$ queries, which is particularly useful for the \prob{Clique} problem. Our framework is based on the standard framework for excluding polynomial Karp kernels~\cite{BodlaenderDFH09}, and in particular on a refinement of the Fortnow and Santhanam argument~\cite{FortnowS11} that lies in the heart of that framework. Using our new framework, we rule out such \TK{}s for \prob{Clique} under all parameters listed in \cref{thm:UpperBound}, showing that the kernels stated in the theorem are essentially optimal in terms of the number of queries.  
\begin{theorem}
\label{thm:LowerBound}
Unless~\NPincoNPslashpoly, the \prob{Clique} problem admits no polynomial-size Turing kernel with $O(n^{1-\vareps})$ queries, for any $\vareps > 0$, for all the parameters listed in \cref{thm:UpperBound}. 
\end{theorem}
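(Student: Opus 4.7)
The plan is to adapt the Fortnow--Santhanam OR-distillation framework (which underlies the standard lower bound technology for Karp kernels) to the setting of Turing kernels that are restricted to use few queries. The first step is to introduce a suitable notion of cross-composition tailored to this setting: a polynomial-time reduction that takes $t$ instances of some \NP-hard problem $L$, each of size at most $s$, and produces a single \prob{Clique} instance $(G,k)$ whose total encoding length is $\mathrm{poly}(t,s)$, whose structural parameter of interest (vertex cover, treewidth, maximum degree, etc.) is bounded by $\polylog(t)\cdot \mathrm{poly}(s)$, and which is a yes-instance iff at least one of the $t$ input instances is. For \prob{Clique}, natural sources for such reductions already exist in the literature (for example, the classical sunflower-style compositions for vertex cover, and variants thereof for stronger parameters such as pathwidth, treewidth or degeneracy), and the plan is to either re-use or slightly modify them to fit the required parameter bound.

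The core technical step is then the following conditional claim: if parameterized \prob{Clique} admitted a polynomial Turing kernel using only $q = O(n^{1-\vareps})$ queries of size $f(p) = \mathrm{poly}(p)$, then the cross-composition above would yield $\NPincoNPslashpoly$. We instantiate the composition with $t$ chosen much larger than $s$ (say $t$ a sufficiently large polynomial in $s$), so that $n = \mathrm{poly}(t,s)$ and $f(p) = \polylog(t)\cdot \mathrm{poly}(s)$ are both much smaller than $t$; in particular $q \cdot f(p) \ll t$. Running the alleged kernel on the composed instance produces a decision tree of depth $q$ whose internal nodes are labelled by queries of size $f(p)$. We then argue, by an averaging/pigeonhole argument over this decision tree, that the output of the kernel depends on only a small ``fingerprint'' of the composed instance, of total length $O(q \cdot f(p)) \ll t$. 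Combined with the Fortnow--Santhanam witnessing argument (using standard tools such as Drucker's error-correcting refinement, or the Dell--van Melkebeek co-nondeterministic protocol), this fingerprint can be exploited either as polynomial advice or as a co-nondeterministic witness to simulate $L$ in $\coNP/\poly$, contradicting $\NPnotincoNPslashpoly$.

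The main obstacle will be handling the \emph{adaptivity} of the Turing kernel, which is the central novelty distinguishing this argument from the Karp-kernel lower bound. A Karp kernel produces a single small equivalent instance, whose size bound immediately translates into an information-theoretic bottleneck; a Turing kernel instead defines a branching process with up to $2^q$ distinct leaves, and the witnessing procedure must be carried out uniformly across this entire tree. I expect the key lemma to be a ``tree-compression'' statement roughly of the form: for any Turing kernel with $q$ queries of size $f(p)$, there is a polynomial-size string $w$ such that, over composed instances, the kernel's accept/reject output is determined by $w$ together with the yes/no status of only polynomially many of the $t$ input instances. Once this step is in place, applying the general statement to each of the parameters in \cref{thm:UpperBound} reduces to exhibiting an appropriate cross-composition with the required parameter bound, which can be done parameter-by-parameter and in most cases inherited from, or reduced to, the construction for a ``smaller'' parameter lower in the hierarchy.
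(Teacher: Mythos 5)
Your plan is in the right spirit---compose many instances into one, run the hypothetical Turing kernel on the composed instance, and leverage the small total ``communication'' of the kernel to build a Fortnow--Santhanam-style $\coNP/\poly$ algorithm for the source problem---but two specific points diverge from what actually makes the argument work, and one of them is a genuine gap.

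First, the gap: you require only that the composed instance has encoding length $\poly(t,s)$. That is too weak. If the composed instance had size, say, $n = t^2\cdot\poly(s)$, then a kernel with $O(n^{1-\vareps})$ queries of size $\poly(p) = \poly(s,\log t)$ uses total query length $O(t^{2(1-\vareps)}\poly(s,\log t))$, which is \emph{not} $o(t)$ unless $\vareps > 1/2$. To rule out kernels for every $\vareps>0$, you need the composition to have near-\emph{linear} blow-up in $t$: the composed instance must have size $t\cdot\poly(\log t)\cdot\poly(s)$, while the parameter stays $\poly(s+\log t)$. This is exactly what the paper isolates as a ``linear OR composition,'' and it is the crucial quantitative condition that makes the distillation step go through. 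You do write ``$q\cdot f(p)\ll t$,'' but your size bound $n=\poly(t,s)$ does not imply it.

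Second, a mismatch in approach rather than an error: you anticipate a ``tree-compression'' lemma summarizing a depth-$q$ decision tree with up to $2^q$ leaves, and you reach for heavy tools (Drucker's refinement, Dell--van Melkebeek protocols). The paper handles adaptivity with a much lighter touch: on a fixed input tuple the kernel traverses a \emph{single path} through its decision tree, and the advice is simply the resulting transcript---the sequence of queries issued together with their yes/no answers. Because the total query size on that path is at most $t^{1-\vareps}\poly(s)$, the transcript is a string of length well below $t$, and one can run the classical Fortnow--Santhanam covering/pigeonhole argument verbatim on these transcript strings (rather than on a single compressed instance, as in the Karp case). The $\coNP/\poly$ algorithm then guesses the tuple of instances, picks one of the $n{+}1$ covering transcripts from the advice, and replays the kernel deterministically, looking up each oracle call in the transcript; if a call is not found, that branch simply fails. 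No uniform summarization over all $2^q$ leaves, no error-correction, and no co-nondeterministic communication protocol is needed. Your ``tree-compression'' statement, as phrased (a single string $w$ plus the yes/no status of polynomially many of the $t$ input instances determines the output), is not what is required and does not obviously hold; the right object to compress is the transcript of the one execution, not the whole tree.

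With those two corrections, the rest of your plan---instantiate the linear composition per parameter (re-using the vertex-cover composition of Bodlaender et al., and plain disjoint union for degree-type parameters), then propagate the lower bound through the parameter hierarchy---matches the paper's route to the theorem.
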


The proof for \cref{thm:LowerBound} is given in \cref{sec:lower-bounds}, while the proof of \cref{thm:UpperBound} is given in \cref{sec:upper-bounds}. Together, both theorems give quite a clear picture for the number of queries any polynomial-size Turing kernel \prob{Clique} can have, under several natural parameterizations. All these parameters, and the relations between them, are depicted in Figure~\ref{fig:paramhier}.

\begin{figure}[h!]
 \centering
 \begin{tikzpicture}
  \def\xr{1.25}
  \def\yr{0.45}
  \newcommand{\parabox}[5]{
    \node (#1) at (#2*\xr,#3*\yr)[minimum width=1.5*\xr cm, minimum height=1*\yr cm,text width=1.5*\xr cm, rounded corners, align=center,draw,fill=gray!5!white,font=\scriptsize,#5]{#4};
  }
  
  \parabox{deg}{2}{0}{degeneracy}{very thick,draw=green!50!black}
  \parabox{delta}{2}{2}{maximum degree}{}
  \parabox{bw}{2}{6}{Bandwidth}{very thick,draw=red!50!black}
  \parabox{cw}{3}{4}{Cutwidth}{very thick,draw=red!50!black}

  \parabox{tw}{0}{2}{Treewidth}{}
  \parabox{pw}{0}{4}{Pathwidth}{}
  \parabox{fvs}{-2}{4.5}{Feedback vertex set number}{}
  \parabox{dti}{-4}{4}{Distance to Interval~$+~k$}{}
  \parabox{ebi}{-6}{4}{Edge bipartization}{}
  \parabox{dtc}{-2}{1.5}{Distance to Chordal~$+k~$}{very thick,draw=green!50!black}
  \parabox{oct}{-4}{2}{Odd cycle transversal}{very thick,draw=green!50!black}
  \parabox{dbd}{-6}{6.25}{Distance to Bounded Degree}{very thick,draw=green!50!black}
  \parabox{td}{0}{6}{Treedepth}{}
  \parabox{vc}{-1}{8}{Vertex cover number}{very thick,draw=red!50!black}
  
  \parabox{loc}{4}{6}{Longest odd cycle length}{very thick,draw=green!50!black}
  
  \foreach\x/\y in{td/vc,tw/pw,tw/fvs,fvs/vc,pw/td,pw/bw,deg/tw,deg/delta,delta/bw,delta/cw,dtc/dti,dbd/vc,oct/fvs,oct/ebi,ebi/vc,loc.north/vc}{\draw[->,>=latex,thick] (\x) to (\y);}
  \draw[->,>=latex,thick,dashed] (dti) to (vc);
  \draw[->,>=latex,thick,dashed] (dtc) to (fvs);

 \end{tikzpicture}
\caption{Parameters investigated in this paper. An arrow from some parameter~$p_1$ to another parameter~$p_2$ means that $p_1 = \poly(p_2)$ (dashed arrows mean that $p_1 = \poly (p_2)$ on all Yes-instances). 
Thus, 
for showing upper bounds it suffices to focus on the source parameters of this dag (thick green frames), and for lower bounds it is enough to focus on the sink parameters (red frames).}
 \label{fig:paramhier}
\end{figure}

\section{Preliminaries}
\label{sec:pre}

We use basic notation from graph theory~\cite{Diestel10} and parameterized complexity~\cite{CyganFKLMPPS15}. Throughout the paper, we will work on the binary alphabet $\Sigma := \{0, 1\}$. For a non-negative integer~$x$, we let~$\{0,1\}^{\leq x}$ denote the set of strings of length at most $x$ over the alphabet~$\{0,1\}$. We also use $\{0,1\}^*$ to denote $\bigcup_{x \in \N} \{0,1\}^{\leq x}$. A \emph{classical problem} is a subset of~$\{0,1\}^*$. A \emph{parameterized problem} is a subset of~$\Sigma^* \times \mathbb{N}$.
For an instance~$(y, p)$ of a parameterized problem, we define~$|(y, p)| := |y| + p + c$, where $|y|$ is the encoding length of~$y$ and $c$ is a constant sufficiently large to encode the parentheses. For sequence~$(y_1, p_1), \dots, (y_k, p_k)$ of instances of a parameterized problem we define $|(y_1, p_1), \dots, (y_k, p_k)| : = \sum_{i =1}^k |(y_i, p_i)|$.

The main protagonist of this paper is the \Clique{} problem: Given a graph~$G$ and an integer~$k$, determine whether there are $k$ pairwise adjacent vertices in~$G$ (\emph{i.e.}, whether $G$ contains a $k$-clique). For a graph~$G$, we denote by~$n_G$ and~$m_G$ the number of vertices and edges of~$G$, respectively. We assume throughout that graphs are represented using adjacency lists. In this way we need $O(n_G +m_G \tlog{n_G})$ bits to encode $G$, and so the encoding length~$n$ of an instance $(G,k)$ of \Clique{} is $O(n_G +m_G \tlog{n_G} + \lg(k))=O(n_G +m_G \tlog{n_G})$. We refer to $n$ as the \emph{size} of $(G,k)$ (or just $G$ for simplicity). Finally, we use $\poly(n)$ instead of $n^{\O(1)}$ when convenient.

\section{Lower Bounds}
\label{sec:lower-bounds}

In the following we introduce a framework to exclude polynomial-sized Turing kernel with $O (n^{1-\vareps})$-queries under the standard assumption of $\NP \not \subseteq \coNP/\poly$, and in particular, we provide a complete proof for \cref{thm:LowerBound}. The basic idea for excluding such Turing kernels is the same as the one for excluding polynomial-sized Karp kernels~\cite{BodlaenderDFH09}: Combining a (linear) OR composition (see \Cref{def:lin-comp}) with a polynomial-sized Turing kernel that uses only $O(n^{1-\vareps})$-queries results in an (oracle-OR-) distillation (see \Cref{def:oracleOR}); this is shown in \Cref{thm:CompositionTuringKernel}. The latter can then be used to compress $t$ instances of an NP-hard problem into queries with total size at most~$t-1$, which in turn can be used to construct a polynomial-size advice for the NP-hard problem, resulting in $\NP \subseteq \coNP/\poly$ (see \Cref{thm:oracle OR distillation}). All together, this will give us the following theorem:
\begin{theorem}
\label{thm:lower-bound-main}
Let $R$ be any parameterized problem. Suppose that there is a linear OR composition from some \NP-hard problem~$L$ to $R$. Then~$R$ admits no polynomial-size Turing kernel with $O(n^{1-\vareps})$ queries, for any $\vareps > 0$, unless $\NP \subseteq \coNP/\poly$. 
\end{theorem}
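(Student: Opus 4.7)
The plan is to derive the statement as a short corollary of the two intermediate results announced earlier: \Cref{thm:CompositionTuringKernel}, which welds a linear OR composition together with a small-query Turing kernel into an oracle-OR-distillation, and \Cref{thm:oracle OR distillation}, which shows that an oracle-OR-distillation of an $\NP$-hard problem forces $\NP \subseteq \coNP/\poly$. With both of those in hand, the main remaining task is a parameter-bookkeeping argument to confirm that the numerics work out for every $\vareps > 0$.

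First I would assume for contradiction that $R$ admits a polynomial-size Turing kernel $\mathcal{K}$ issuing at most $c \cdot n^{1-\vareps}$ oracle queries on inputs of total encoding size $n$. Starting from $t$ instances $x_1, \ldots, x_t$ of $L$, each of bit length at most $s$, the linear OR composition produces a single instance $(y,k)$ of $R$ with $|(y,k)| = \mathrm{poly}(s,t)$ and, crucially, parameter $k = O(s)$; the bound on $k$ must be independent of $t$, and this is exactly what the linearity of the composition buys us.

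Second I would feed $(y,k)$ to $\mathcal{K}$. Polynomial kernel size forces every query to have bit length $\mathrm{poly}(k) = \mathrm{poly}(s)$, while the query budget yields at most $c \cdot \mathrm{poly}(s,t)^{1-\vareps}$ queries along any computation branch. Choosing $t := s^{d}$ for a constant $d = d(\vareps)$ large enough, the total length of the queries issued by $\mathcal{K}$ becomes $\mathrm{poly}(s) \cdot (s \cdot t)^{1-\vareps}$, which is eventually smaller than $s \cdot t$. This is precisely the quantitative regime in which \Cref{thm:CompositionTuringKernel} converts the composition plus $\mathcal{K}$ into an oracle-OR-distillation for $L$. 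Feeding that distillation into \Cref{thm:oracle OR distillation} collapses $\NP$ into $\coNP/\poly$, contradicting the hypothesis and finishing the proof of \Cref{thm:lower-bound-main}.

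The main obstacle at this top level is merely the arithmetic of selecting $d$ so that the shrinking by $n^{-\vareps}$ in the query count beats the $\mathrm{poly}(s)$ blow-up per query; this is where I expect to have to verify that the \emph{linear} (as opposed to merely polynomial) behaviour of the composition is essential. The genuine difficulty is hidden inside the two auxiliary theorems themselves, and especially in \Cref{thm:oracle OR distillation}: there one has to lift the Fortnow--Santhanam argument from one-shot distillations to \emph{adaptive} oracle computations, so that a polynomial-size $\coNP$-advice can still be extracted even though the kernel's queries form a branching decision tree rather than a fixed list of outputs.
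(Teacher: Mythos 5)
Your top-level plan is exactly what the paper does: \Cref{thm:lower-bound-main} is proved in one sentence by chaining \Cref{thm:CompositionTuringKernel} (linear OR composition $+$ polynomial Turing kernel with $O(n^{1-\vareps})$ queries $\Rightarrow$ oracle OR distillation) into \Cref{thm:oracle OR distillation} (oracle OR distillation $\Rightarrow L \in \coNP/\poly$, hence $\NP \subseteq \coNP/\poly$ since $L$ is $\NP$-hard and $\coNP/\poly$ is closed under Karp reductions). Nothing further is needed at this level; the arithmetic you worry about lives entirely inside the two auxiliary theorems.

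Two imprecisions in your sketch of those internals are worth noting, because they obscure where the linearity is actually used. First, you write that the composed instance has size $|(y,k)| = \poly(s,t)$; this is too weak. The definition requires $|(y,p)| = t \cdot \poly(\log t) \cdot \poly(s)$, i.e., essentially \emph{linear} in $t$. If the size were allowed to grow like $t^b$ for $b > 1/(1-\vareps)$, then $(t^b)^{1-\vareps} > t$ and the total query size would not come out below $t$, so the argument would collapse. (You later write $(s\cdot t)^{1-\vareps}$, implicitly assuming linearity, but the explicit ``$\poly(s,t)$'' is the wrong general statement.) Second, the ``choose $t := s^d$'' step does not belong in \Cref{thm:CompositionTuringKernel}: that theorem holds for \emph{all} $t$ and only certifies the total query size bound $t^{1-\vareps}\poly(n)$ of an oracle OR distillation. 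The specific choice of $t$ (the paper uses $T(n) = p(n)^{2/\vareps}$) happens inside the proof of \Cref{thm:oracle OR distillation}, where it is needed to drive the total query size below $T(n)$ so that the Fortnow--Santhanam covering argument can run. Your remark that \Cref{thm:oracle OR distillation} must handle \emph{adaptive} queries is well placed and is indeed the nontrivial novelty over the Karp-kernel setting, but it concerns the proof of that theorem, not the corollary you are asked to prove.
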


\subsection{Generic Framework}

We start by defining oracle OR distillations, which are basically algorithms that have access to an oracle of a problem $R$, and solve the OR of $t$ instances of a problem $L$ in polynomial time. More precisely: 
\begin{definition}[Oracle OR distillation]
\label{def:oracleOR}
Let $L$ be a classical problem, and let $R$ be a parameterized problem. An \emph{oracle OR distillation} of~$L$ into~$R$ is an algorithm~$\mathcal{A}$ with oracle access to~$R$ such that there exists some $\vareps  > 0$, where given any sequence of instances $x_1, \dots, x_t$ for~$L$, the algorithm~$\mathcal{A}$
\begin{itemize}
\item uses queries $(y_1,p_1),\ldots,(y_s,p_s)$ to the oracle of $R$ whose total size $\sum_{i=1}^s |(y_i,p_i)|$ is bounded by $t^{1-\vareps} \poly (\max_{i \in \{1, \dots, t\}} |x_i|)$, and
\item returns \yes\ if and only if there exists at least one index~$i$ such that $x_i\in L$.
\end{itemize}
\end{definition}

We now show how an OR composition together with a polynomial-size Turing kernel with $O(n^{1-\vareps})$-queries can be used to construct an oracle OR distillation. However, for this to work, the size of the output of the OR composition can only grow linearly in the number~$t$ of instances it composes. That is, the output should have size at most $t \poly (n)$, where $n$ is the size of the largest input instance. We refer to OR compositions fulfilling this additional property as \emph{linear} OR compositions. Formally:
\begin{definition}
\label{def:lin-comp}
Let~$L$ be a classical problem, and let~$R$ be a parameterized problem.  A \emph{linear OR composition} from~$L$ to~$R$ is an algorithm that takes~$t$ instances~$x_1,\dots,x_t$ of~$L$ and computes in time~$\poly(\sum_t |x_t|)$ an instance~$(y,p)$ of~$R$ such that
\begin{enumerate}[(i)]
\item $|(y,p)| = t \cdot \poly (\log t) \cdot \poly(\max_t |x_t|)$,
\item $p= \poly(\max_t |x_t| + \tlog{t})$, and
\item $(y,p)\in R$ if and only if~$x_i\in L$ for some~$i\in\set{t}$.
\end{enumerate}
\end{definition}

\begin{theorem}
\label{thm:CompositionTuringKernel}
Let $L$ be an \NP-hard problem, and let $R$ be a parameterized problem. Suppose that there is a linear OR composition from~$L$ to $R$, and that $R$ admits a polynomial-size Turing kernel with $O(n^{1-\vareps})$ many queries for some $\vareps > 0$. Then $L$ admits an oracle OR distillation.
\end{theorem}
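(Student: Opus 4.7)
The plan is to compose the linear OR composition with the hypothetical Turing kernel in the straightforward way. Given a sequence $x_1,\dots,x_t$ of instances of $L$, set $N := \max_i |x_i|$ and apply the linear OR composition to obtain an instance $(y,p)$ of $R$ satisfying $(y,p) \in R$ if and only if $x_i \in L$ for some $i \in \set{t}$, with $n := |(y,p)| \le t \cdot \polylog(t) \cdot \poly(N)$ and $p \le \poly(N + \tlog{t})$. Then simulate the polynomial-size Turing kernel for $R$ on input $(y,p)$: whenever the kernel issues an oracle query, the distillation algorithm forwards that query verbatim to its own oracle for $R$, and it finally returns whatever the kernel returns. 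Correctness is immediate from the equivalence provided by \cref{def:lin-comp} together with the specification of a Turing kernel.

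The nontrivial part is to bound the total query size. Let $\vareps' > 0$ be the constant from the assumed Turing kernel, so it asks at most $O(n^{1-\vareps'})$ queries, each of total size at most $\poly(p) = \poly(N + \tlog{t})$ because the kernel has polynomial size. Multiplying and then using $n \le t \cdot \polylog(t) \cdot \poly(N)$, the total query size is bounded by
\[
O\bigl(n^{1-\vareps'}\bigr) \cdot \poly(N + \tlog{t}) \;=\; O\bigl(t^{1-\vareps'} \cdot \polylog(t)\bigr) \cdot \poly(N),
\]
where the $\tlog{t}$ contribution from the per-query size is absorbed into either $\poly(N)$ or the existing $\polylog(t)$ factor, and the $(1-\vareps')$-power of $\polylog(t)$ is just another $\polylog(t)$.

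To finish, I convert this into the form $t^{1-\vareps} \cdot \poly(N)$ required by \cref{def:oracleOR}. Since $\polylog(t) = t^{o(1)}$, for any $\vareps \in (0,\vareps')$ there exists a threshold $t_0$ such that $t^{1-\vareps'} \cdot \polylog(t) \le t^{1-\vareps}$ for every $t \ge t_0$; the finitely many $t < t_0$ are dominated by a sufficiently large constant times $\poly(N)$. Choosing for instance $\vareps := \vareps'/2$ then yields the desired bound, and the overall algorithm clearly runs in time polynomial in $\sum_i |x_i|$, since both the composition and the kernel are polynomial-time. I expect the only delicate point to be precisely this final absorption of the $\polylog(t)$ slack from the composition into the savings $\vareps'$ from the kernel — which is exactly why \cref{def:lin-comp} is tailored to permit only a $\polylog(t)$, rather than $\poly(t)$, blow-up in the size of the composed instance: a $t^{\Omega(1)}$ blow-up would swallow the $n^{1-\vareps'}$ gap and break the argument.
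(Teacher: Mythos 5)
Your proof is correct and has the same overall skeleton as the paper's (compose, feed the output into the kernel, forward queries, bound the total query size), but you dispose of the $\polylog(t)$ slack in a genuinely different way. The paper makes the up-front normalization that the instances $x_1,\dots,x_t$ may be assumed pairwise distinct (duplicates can be removed without affecting the OR), which forces $t \le 2^{N}$ and hence $\log t = O(N)$; with that, every occurrence of $\poly(\log t)$ in the composed instance size and of $\poly(N + \log t)$ in the parameter bound collapses directly into $\poly(N)$, and the total query size comes out as $t^{1-\vareps}\poly(N)$ with the \emph{same} $\vareps$ as the kernel, with no threshold argument needed. You instead leave the polylogarithmic factors in place and absorb them by observing $\polylog(t) = t^{o(1)}$, paying a constant-factor reduction in the exponent (passing from $\vareps'$ to some $\vareps < \vareps'$) plus a finite-range patch for $t < t_0$. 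Both are sound; the paper's distinct-instances trick is a standard move inherited from the Karp-kernel lower-bound framework and yields a cleaner one-line bound, whereas your exponent-shrinking argument shows that the theorem holds even without that normalization and makes explicit that the existential quantifier over $\vareps$ in Definition~\ref{def:oracleOR} gives you the needed room. Your closing remark correctly identifies the role of the $\poly(\log t)$ (rather than $\poly(t)$) blow-up allowance in Definition~\ref{def:lin-comp}; the paper's choice of that bound is precisely what keeps the slack absorbable by either technique.
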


\begin{proof}
Let $x_1, \dots, x_t$ be instances of $L$, and let $n \ceq \max_{i\in\set{t}} |x_i|$. We assume without loss of generality that $x_i \neq x_j$ for every $i \neq j$ (identical instances can safely be removed). Thus, we have $t \le 2^n$, and consequently $\log(t)=O(n)$. We construct an oracle OR distillation for $L$ as follows (see \Cref{fig:my_label}).
\begin{enumerate}
\item Apply the linear OR composition to $x_1, \dots, x_t$, resulting in an instance $(y,p)$ of $R$ with $(y,p) \in R$ if and only if there is some $i\in \{1  , \dots, t\}$ such that $x_i \in L$. By the requirements of a linear OR composition we have $|(y,p)| = t \cdot \poly ( \log t) \cdot \poly(n) = t \cdot \poly (n)$, and~$p = \poly (n+\log(t)) = \poly(n)$, since $\log(t)=O(n)$.
\item Apply the assumed polynomial Turing kernel to~$(y, p)$. This Turing kernel queries an oracle for~$R$ with instances $(y_1,p_1), \ldots, (y_s,p_s)$ such that $|(y,p_i)|=\poly(p)$ for each $i \in \{1,\ldots,s\}$, and $s = (t\cdot \poly(n))^{1-\vareps} = t^{1-\vareps} \cdot \poly(n)$. Observe that we have 
\[
\sum_{i=1}^s |(y_i,p_i)| = t^{1-\vareps} \cdot \poly(n) \cdot \poly (p) = t^{1-\vareps} \cdot\poly(n),
\]
since $p=\poly(n)$. 
\item Return \yes{} if and only if the Turing kernel returns \yes{}.
\end{enumerate}

We argue that the algorithm above is indeed a oracle OR distillation. The oracle OR distillation calls the oracle only for the queries $y_1, \dots, y_s$, and the total size of these queries is bounded by $t^{1-\vareps}\poly (n)$. Furthermore, the oracle OR distillation returns \yes{} if and only if there exists some~$i\in\set{t}$ such that $x_i \in L$. Thus, the algorithm above satisfies both requirements of \Cref{def:oracleOR}, and the lemma follows.
\end{proof}

\begin{figure}[hbt]
\centering
\begin{tikzpicture}
\def\xr{1}
\def\yr{1}

\tikzpinst{cc}{0}{-2*\yr}{$(y,p)$};
        
\foreach \x in {1,...,6}{
\if\x5 
\node (x\x) at (1.5*\x*\xr-3.5*1.5*\xr,0*\yr)[minimum height=\yr*0.75cm]{$\cdots$};
\else
\if\x6 
\tikzinst{x\x}{1.5*\x*\xr-3.5*1.5*\xr}{0*\yr}{$x_{t}$};
\else
\tikzinst{x\x}{1.5*\x*\xr-3.5*1.5*\xr}{0*\yr}{$x_{\x}$};
\fi
\fi
\draw[->,thick,>=latex] (x\x) to [out=-82-\x,in=90](cc);
}
        
\foreach \x in {1,...,4}{
\if\x3
\node (tk\x) at (2.25*2.5*\xr-2*2.5*2.25*\xr+2.25*\x*\xr,-2*\yr-1.5*\yr)[minimum height=\yr*0.75cm]{$\cdots$};
\else
\if\x4
\tikzpinst{tk\x}{2.25*2.5*\xr-2*2.5*2.25*\xr+2.25*\x*\xr}{-2*\yr-1.5*\yr}{$(y_{s},p_s)$}[1.75];
\else
\tikzpinst{tk\x}{2.25*2.5*\xr-2*2.5*2.25*\xr+2.25*\x*\xr}{-2*\yr-1.5*\yr}{$(y_{\x},p_\x)$}[1.75];
\fi
\fi
}
        
\node (tk) at (2.25*2.5*\xr-2.5*2.25*\xr,-2*\yr-1.5*\yr)[minimum width=8.5*\xr cm,minimum height=1.25*\yr cm,dashed,thick,rounded corners,draw]{};
        
\draw[->,thick,>=latex] (cc) to (tk);
\draw[->,thick,>=latex] (tk1) to (tk2);
\draw[->,thick,>=latex] (tk2) to (tk3);
\draw[->,thick,>=latex] (tk3) to (tk4);
        
\node (output) at (2.25*2.5*\xr-2.5*2.25*\xr, -2*\yr-3*\yr)[fill=lightgray!33!white,minimum width=\xr cm,minimum height=\yr*0.75cm,draw,font=\small]{\yes/\no};
        
\draw[->,thick,>=latex] (tk) to [out=-90,in=90](output);
        
\def\deps{0.01}
\def\xsh{0.25}

\node[left =\xsh*1 cm of x1,align=right,font=\small]{instances\\ of~$L$};
\node[left =\xsh*1 cm of x1,yshift=-1*\yr cm,align=right,font=\small]{linear OR-\\ composition};
\node[left =\xsh*1 cm of x1,yshift=-2*\yr cm,align=right,font=\small]{instance\\ of~$R$};
\node[left =\xsh*1 cm of x1,yshift=-3*\yr cm,align=right,font=\small]{Turing kernel};
\node[left =\xsh*1 cm of x1,yshift=-3.5*\yr cm,align=right,font=\small]{queries\\ of~$R$};
\node[left =\xsh*1 cm of x1,yshift=-5*\yr cm,align=right,font=\small]{Output};

\node[right =\xsh*1 cm of x6,font=\small]{$|x_i|\leq n$};
\node[right =\xsh*1 cm of x6,yshift=-3.5*\yr cm,align=left,font=\small]{$\sum_{i=1}^s |(y_i, p_i)|=$\\$ t^{1-\vareps}\cdot \poly(n)$};
\node[right =\xsh*1 cm of x6,yshift=-2*\yr cm,align=left,font=\small]{$|(y,p)|\leq t\cdot \poly(n)$};
\end{tikzpicture}
\caption{An oracle OR distillation for $R$ obtained by combining a linear OR composition of~$L$ to~$R$, and a polynomial-size Turing kernel for $R$ with $O(n^{1-\vareps})$ queries.}
\label{fig:my_label}
\end{figure}
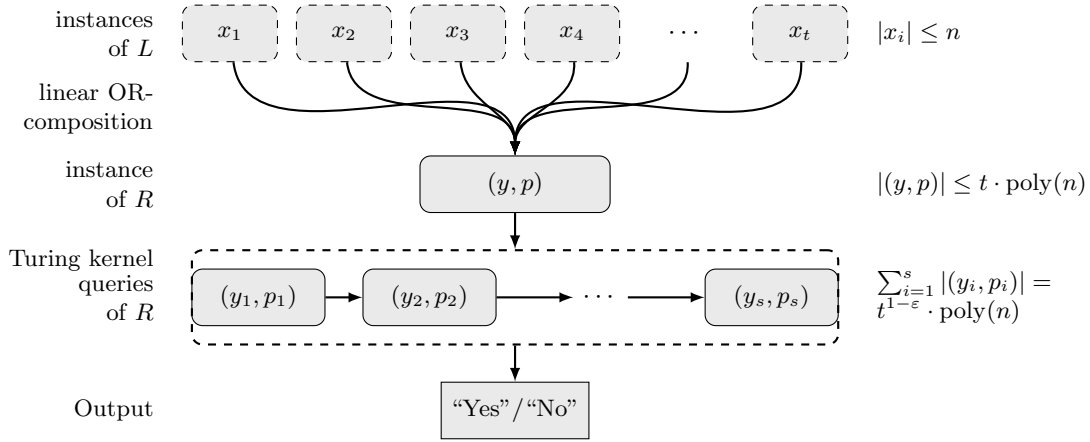

\subsection{Proof of Theorem~\ref{thm:lower-bound-main}}

The final ingredient of our lower bounds framework is showing that no \NP-hard problem admits an oracle OR distillation unless $\NP \subseteq \coNP/\poly$, proved in \Cref{thm:oracle OR distillation} below. Combining this theorem with \Cref{thm:CompositionTuringKernel} directly gives us \Cref{thm:lower-bound-main}. The proof of \Cref{thm:oracle OR distillation} is similar to the \citet{FortnowS11} argument used in original framework for (Karp) kernels~\cite{BodlaenderDFH09}; below we follow the exposition given in \cite[Proof of Theorem 15.3]{CyganFKLMPPS15}. 

The basic intuition is that even with total query size $t^{1-\vareps} \poly (n)$, the assumed oracle OR distillation of an \NP-hard problem $L$ can reduce any $t$ instances of $L$ in polynomial time to oracle queries of some parameterized problem $R$ that have total size smaller than~$t$. By basically using the same arguments as in~\cite{FortnowS11}, we can then construct a polynomial advice for $L$, arriving at the conclusion that $L \in \coNP / \poly$. For our purposes, we need to adapt the proof of~\cite{FortnowS11} at two places: First, to achieve the desired total size of the queries used by oracle OR distillation, we need select $t$ in a different manner (\emph{e.g.}, now it will also depend on~$\vareps$). Second, the advice used to show that $L \in \coNP / \poly$ will need to be more complex.

\begin{theorem}
\label{thm:oracle OR distillation}
Let $L$ be any problem, and let $R$ be a parameterized problem. If there exists an oracle OR distillation from $L$ into~$R$, then $L \in \coNP/\poly$.
\end{theorem}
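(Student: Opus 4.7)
The plan is to adapt the Fortnow--Santhanam argument (following \cite[Theorem~15.3]{CyganFKLMPPS15}) to the oracle setting, carrying out the two adaptations flagged above: choosing the number of instances differently, and using a more complex advice.

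First I would fix an input length $n$ for $L$ and set the number of instances to $t := n^{c/\vareps}$ for a sufficiently large constant $c$, so that the total query size $t^{1-\vareps}\poly(n)$ guaranteed by the oracle OR distillation is substantially smaller than $t$, say at most $t/(2n)$. I would then view the distillation's computation on a tuple $(x_1,\ldots,x_t) \in (\{0,1\}^{\leq n})^t$ as producing a \emph{transcript}---the sequence of oracle queries together with their true $R$-answers---of total bit-length $T = O(t^{1-\vareps}\poly(n)) \ll t$. Since the Yes/No output is a deterministic function of the tuple and this transcript, the oracle OR distillation can be regarded as a compression of $t$ length-$\leq n$ instances (plus $T \ll t$ bits of side-information drawn from the oracle) into a single bit, namely the OR of their $L$-memberships.

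With this compression viewpoint in place, I would carry out the same iterative pigeonhole used in the classical proof. Since there are at most $2^T \ll 2^t$ possible transcripts, at every round I can identify a transcript shared by many Yes-tuples of the form ``a new Yes-instance padded by previously-collected No-instances,'' and use the shared transcript to certify a constant fraction of the remaining length-$\leq n$ instances. This process terminates in polynomially many rounds and produces an advice of size $\poly(n)$. The $\coNP/\poly$ algorithm for $L$ would then, on input $x$, nondeterministically guess a transcript of length $T$, pad $x$ with No-instances drawn from the advice to form a $t$-tuple, simulate the distillation on the guessed transcript, and decide $x \notin L$ accordingly.

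The main obstacle is the second adaptation: making the advice rich enough that spurious oracle-answer guesses cannot cause false positives. Because the transcript depends on answers from an $R$-oracle that is not available to a $\coNP/\poly$ machine, the advice must package not only canonical No-instances for padding, but also ``certified transcripts'' that let the nondeterministic verification proceed without having to decide $R$ itself. Getting this advice structure right---so that every No-input has an accepting computation while no Yes-input does---is the most delicate piece of the adaptation, and is exactly the extra work flagged in the excerpt.
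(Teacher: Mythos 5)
Your high-level plan is the same as the paper's: encode the distillation's run on a $t$-tuple as a \emph{transcript} (the sequence of queries together with their true $R$-answers), observe that the total transcript length is $\ll t$, run the standard Fortnow--Santhanam iterative pigeonhole on transcripts, and use $\poly(n)$ transcripts as advice. So the compression viewpoint is right, and so is the choice of $t$.

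However, the description of the $\coNP/\poly$ algorithm is confused, and it is precisely where you acknowledge the unresolved ``delicate piece'' that the paper does the real work. Two concrete issues. First, the paper's verifier does \emph{not} guess a transcript and pad $x$ with stored No-instances; the advice does not contain No-instances at all. Instead the verifier nondeterministically guesses an entire tuple $(x_1,\dots,x_{T(n)}) \in (\{0,1\}^{\le n})^{T(n)}$ with $x = x_i$ for some~$i$, and then deterministically simulates~$\mathcal{D}$ on that tuple, using the stored transcript only to answer the oracle calls (matching each query against the $(y_j, p_j, b_j)$ triples; failing safe by declaring $x \notin \xbar L$ if a query is not found). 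Guessing the transcript instead of the tuple leaves you unable to run the simulation, and ``padding from the advice'' gives you no way to guarantee that the resulting tuple actually produces the guessed transcript, since the query sequence is adaptive in the tuple. Second, your pigeonhole is phrased in terms of ``Yes-tuples of the form a new Yes-instance padded by previously-collected No-instances.'' This is backwards: since we are placing $\xbar L$ in $\NP/\poly$, the pigeonhole is run over \emph{all-No} tuples $(\xbar L_n)^{T(n)}$, and the constructed advice set $A_n$ covers $\xbar L_n$, i.e.\ every No-instance of $L$ occurs in some all-No tuple whose transcript is stored.

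Finally, the soundness argument you flag as ``the most delicate piece'' is exactly what you need to supply: the paper observes that because the stored $b_j$ are the \emph{true} oracle answers, any simulation that manages to match all its queries against the advice transcript is a faithful re-execution of $\mathcal{D}$ on that tuple; by the definition of an oracle OR distillation, $\mathcal{D}$ answers Yes on every tuple that contains a Yes-instance, so no tuple containing $x \in L$ can ever make the simulation output No (and if a query is not found, the algorithm rejects anyway). Without this observation the construction has no soundness guarantee, so as written the proposal has a genuine gap, even though the skeleton is the right one.
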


\begin{proof}
Suppose that there exists an oracle OR distillation $\mathcal{D}$ from~$L$ into~$R$. Let $\vareps  > 0$ and $p$ be a polynomial such that $t^{1-\vareps} \cdot p(n)$ upper bounds the total size of the queries used by~$\mathcal{D}$, when $\mathcal{A}$ receives an input of $t$ instances of $L$, each of size at most~$n$. Without loss of generality, we assume that $p(n) \geq 2$ for all $n \in \N$, and that $2/\vareps \in \mathbb{N}$. For $n \in \mathbb{N}$, we set
\[ 
T(n) := p(n)^{2/\vareps} \text{ and } Q(n): = T(n). 
\]
Then when calling~$\mathcal{D}$ on $T(n)$ instances of $L$, each of size at most~$n$, 
the total size of the oracle queries is bounded by~$Q(n)/2$:
\[
T(n)^{1-\vareps} \cdot p(n) \,=\, \bigr( p(n)^{2/\vareps} \bigr)^{1-\vareps} \cdot p(n) 
\,=\, p(n)^{2/\vareps - 1} \,=\, T(n) / p(n) \,\le\,  Q(n)/2. 
\]

Now, consider the following function $f$ that maps $T(n)$ instances $x_1,\ldots,x_{T(n)} \in \{0,1\}^{\le n}$ of $L$ to a single string $a \in \{0,1\}^{\le Q(n)}$ according to the queries of $\mathcal{D}$ on $x_1,\ldots,x_{T(n)}$. Specifically, if $\mathcal{D}$ on input $x_1,\ldots,x_{T(n)}$ queries the instances $(y_1,p_1),\ldots,(y_q,p_q)$ of $R$, $q \leq Q(n)/2$, then we set 
\[
f(x_1,\ldots,x_{T(n)})  \mapsto (y_1,p_1)b_1(y_2,p_2)b_2\cdots (y_q,p_q)b_q,
\]
where $b_i$ is a Boolean value indicating whether or not~$y_i\in R$ for every $i \in \{1, \dots, q\}$. Note that since $\sum_i |(y_i,p_i)| \leq Q(n)/2$ and $q \leq Q(n)/2$, we have that $|f(x_1,\ldots,x_{T(n)})| \leq Q(n)$.

For $n \in \N$, define the following three sets: 
\[ 
L_n := L \cap \{0,1\}^{\le n}, \quad \xbar L_n := \{0,1\}^{\le n} \setminus L_n, \quad \text{and} \quad F_n:= f\left((\xbar L_n)^{T(n)}\right).
\]
The sets~$L_n$ and~$\xbar L_n$ consist of the \yes- and \no-instances of~$L$ of size at most~$n$, respectively. The set $F_n$ contains exactly those elements~$(y_1,p_1)b_1\cdots (y_q,p_q)b_q\in \{0,1\}^{\le Q(n)}$ which correspond to sequences of instances~$(y_1,p_1) \dots, (y_q,p_q)$ of~$R$ that algorithm~$\mathcal{D}$ queries given an input of $T(n)$ \no-instances of~$L$. We say that a set $A \subseteq F_n$ \emph{covers} $\xbar L_n$ if for every~$x\in \xbar L_n$ there exists some~$a\in A$ and $x_1, \dots, x_{T(n)} \in \{0,1\}^{\le n}$ such that $x = x_i$ for some $i \in \{1, \dots, T(n)\}$ and $f(x_1, \dots, x_{T(n)}) = a$. We will need the following claim:
\begin{claim}
\label{claim:coverset}
There exists a set $A_n \subseteq F_n$ that covers $\xbar L_n$ with $|A_n| \le n$.
\end{claim}

\begin{proof}[Proof of claim]
We consecutively construct strings $a_1, a_2, a_3\dots \in F_n$ until we cover $\xbar L_n$. Let $A_i=\{a_1,\ldots,a_i\}$, with $A_0=\emptyset$, and let $X_i$ be the set of strings from $\xbar L_n$ which are not covered by~$A_i$. Then $X_0 = \xbar L_n$. During the construction we will guarantee that $|X_i| \le 2^{-i} |\xbar L_n|$, which is of course satisfied at the beginning for $X_0$. Since $|\xbar L_n| \le |\{0,1\}^{\le n}| < 2^{n+1}$, it will then follow that $X_{n+1}$ is empty, and so the set $A_n=\{a_1, \ldots, a_n\}$ will cover $\xbar L_n$.
    
It remains to show how to construct $a_i$ given $A_{i-1}$. Recall that algorithm $\mathcal{D}$ maps every tuple from $(X_{i-1})^{T(n)}\subseteq (\xbar L_n)^{T(n)}$ into $F_n$. For a string $a \in F_n$, let $\mathcal{D}_{i-1}^{-1}(a) \subseteq (X_{i-1})^{T(n)}$ denote the set of all tuples $(x_1,\ldots,x_{T(n)}) \in (X_{i-1})^{(T(n)}$ with $\mathcal{D}(x_1,\ldots,x_{T(n)})=a$. Since $|F_n| \le 2^{Q(n)}$, it follows by the pigeon-hole principle that there exists some $a\in  F_n$ such that 
\[
|\mathcal{D}_{i-1}^{-1}(a)|  \,\ge\, \frac{|(X_{i-1})^{T(n)}|}{|F_n|} \,\ge\, \frac{|(X_{i-1})^{T(n)}|}{2^{Q(n)}} \, = \, \left(\frac{|X_{i-1}|}{2}\right)^{T(n)}.
\]

We choose $a_i$ to be a string $a \in F_n$ that satisfies the above equality. Observe that every string $x \in X_{i-1}$ that is contained in some $T(n)$-tuple of $\mathcal{D}_{i-1}^{-1}(a)$ is covered by the string~$a$. Consequently, $\mathcal{D}_{i-1}^{-1}(a) \subseteq (X_{i-1} \setminus X_i)^{T(n)}$, and so
\[
\bigl( \frac{|X_{i-1}|}{2}\bigr)^{T(n)} \le |\mathcal{D}_{i-1}^{-1}(a)| \le |(X_{i-1} \setminus X_i)^{T(n)}| = |X_{i-1} \setminus X_i|^{T(n)}.
\]
We infer that $|X_{i-1}\setminus X_i |\ge |X_{i-1}|/2$, and the condition $|X_i | \le 2^{-i} |\xbar L_n|$ follows by a trivial induction. As we have argued, this means that the construction terminates after $n+1$ steps, yielding a feasible set $A_n$. This concludes the proof of the claim.
\end{proof}
  
We now argue that $L \in \coNP/\poly$ by showing that its complement $\xbar L := \{0,1\}^* \setminus L$ is in $\NP/\poly$. That is, we show that a nondeterministic polynomial-time algorithm with access to polynomial-size advice can decide $\xbar L$. The advice our algorithm uses for an input of length $n$ is the set $A_n$ promised by~\cref{claim:coverset}. Note that as $A_n$ contains at most $n$ strings, each of length at most $Q(n)=\poly(n)$, the size of the advice is indeed polynomial in~$n$.
  
Let $x \in \{0,1\}^n$ for some $n \in \N$. Our nondeterministic algorithm~$\mathcal{A}$ decides whether $x \in \xbar L$ using of the advice $A_n$ as follows:
\begin{enumerate}
\item Guess nondeterministicly a sequence $x_1, \ldots, x_{T(n)} \in \{0,1\}^{\leq n}$. If there is no $i \in \{1, \ldots, T(n)\}$ such that $x =x_i$, then determine that $x \notin \xbar L$.
\item Simulate algorithm~$\mathcal{D}$ on input~$(x_1, \dots, x_{T(n)})$, once per each string in $A_n$ (so $n+1$ times in total). For each string $(y_1,p_1)b_1\cdots (y_q,p_q)b_q\in A_n$, in the simulation of $\mathcal{D}$ using this string, replace any oracle call $y$ that $\mathcal{D}$ makes by checking whether $y=y_i$ for some $\{1,\ldots,q\}$. If so, use the value $b_i$ to determine how to proceed with the simulation $\mathcal{D}$. Otherwise, if $y \neq y_i$ for all $i \in \{1,\ldots,q\}$, determine that $y \notin \xbar L$.
\item Conclude that~$x \in \xbar L$ if the simulation of~$\mathcal{D} $ returns \no, and otherwise conclude $x\notin \xbar L$.
\end{enumerate}
  
Suppose that $x \in \xbar L_n$. Since~$A_n$ covers~$\xbar L_n$, there exists a guess $x_1, \dots, x_{T(n)}\in \{0,1\}^{\leq n}$ with~$x_i=x$ for some~$i\in\set{T(n)}$ such that $f(x_1, \dots, x_{T(n)})\in A_n$. Note that for each instance of~$R$ that $\mathcal{D}$ queries given~$x_1, \ldots, x_{T(n)}$, the string $f(x_1,\ldots,x_{T(n)})$ contains the information whether this instance is contained in~$R$ or not. Thus, for this guess, algorithm~$\mathcal{A}$ correctly concludes that $x \in \xbar L_n$. Conversely, suppose that $x\notin \xbar L_n$. By the definition of a oracle OR distillation, every tuple $(x_1, \dots, x_{T(n)})$ with $x = x_i$ for some $i \in \{1, \dots, t\}$ is mapped to a string excluded from~$F_n\supseteq A_n$. Thus, there is no guess such that $\mathcal{A}$ concludes that~$x \in \xbar L_n$. The correctness of $\mathcal{A}$ thus follows. Moreover, the running time of $\mathcal{A}$ is indeed polynomial, since $T(n)$, $|A_n|$, and the running time of~$\mathcal{D}$ are all polynomial in $n$.
\end{proof}

\subsection{Proof of \Cref{thm:LowerBound}}

We now apply \Cref{thm:lower-bound-main} to several parameterizations of \textsc{Clique}.
For the parameter vertex cover number, we observe that the OR composition given by Bodlaender et al.~\cite{BodlaenderJK14} is indeed a linear one (technically, the OR composition given by Bodlaender et al.~\cite{BodlaenderJK14} requires all input instances to have the same number of vertices and same $k$, but this restriction can easily be achieved by adding apex or isolated vertices in a preprocessing step). Herein, they construct a graph with a vertex set consisting of the union of a set~$A$ of~$t$ vertices and a set~$B$ of~$kn'+3\binom{n'}{2}$ vertices, where~$n'$ denotes the maximum number of vertices from any input instance's graph. Vertices from~$A$ are only adjacent with vertices from~$B$, and thus the size of the constructed graph is in~$t \cdot \log (t)\cdot \poly(kn')$.
Thus, combining this with \Cref{thm:lower-bound-main} we immediately obtain:
\begin{corollary}
\label{cor:vc}
\textsc{Clique} parameterized by vertex cover number does not admit a polynomially sized Turing kernel with $O(n^{1 - \vareps })$ queries, where $n$ is the size of the input graph.
\end{corollary}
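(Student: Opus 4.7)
My plan is to apply \Cref{thm:lower-bound-main} with the source problem $L$ set to plain (unparameterized) \Clique{}, which is \NP-hard, and the target problem $R$ set to \Clique{} parameterized by the vertex cover number. All that is needed is to furnish a \emph{linear} OR composition from $L$ into $R$, so the heart of the argument is to verify that the OR composition of Bodlaender et al.~\cite{BodlaenderJK14} already satisfies the three requirements of \Cref{def:lin-comp}.

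First I would carry out the minor preprocessing step flagged in the paragraph above the corollary: pad the $t$ input instances $(G_1,k_1),\ldots,(G_t,k_t)$ by inserting isolated and apex vertices so that every graph has the same vertex count $n'$ and every target clique size equals a common value $k$, without affecting the OR of the answers and at only a polynomial overhead in $\max_i |x_i|$. This step is purely to match the interface required by the \cite{BodlaenderJK14} construction. I would then recall that construction explicitly: it builds a graph on $A \cup B$ with $|A|=t$ (one selector vertex per instance) and $|B| = kn' + 3\binom{n'}{2} = \poly(n')$, with edges running only between $A$ and $B$, and with $(y,p) \in R$ iff some $x_i \in L$.

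The remaining task is to check the three conditions of \Cref{def:lin-comp}. Condition (iii) is exactly the correctness property proved in \cite{BodlaenderJK14}. Condition (ii) holds because $B$ itself is a vertex cover of size $\poly(n')$, so the parameter $p$ is $\poly(\max_i |x_i|)$, independent of $t$. Condition (i) follows from an edge count: since edges run only between $A$ and $B$ there are at most $|A|\cdot|B| = t \cdot \poly(n')$ of them, and the adjacency-list representation adds only an $O(\log(|A|+|B|)) = O(\log t + \log n')$ factor per edge, giving total encoding length $t \cdot \poly(\log t) \cdot \poly(n')$. Once these three conditions are in hand, \Cref{thm:lower-bound-main} delivers the claim under $\NPnotincoNPslashpoly$. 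The only spot where care is really needed is the size accounting in (i), where one must ensure the logarithmic factors from the adjacency-list encoding are absorbed into the allowed $\poly(\log t)$ slack rather than leaking into the leading $t$; everything else is essentially bookkeeping on top of the already-known \cite{BodlaenderJK14} composition.
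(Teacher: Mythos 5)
Your proposal is correct and follows the same route as the paper: instantiate \Cref{thm:lower-bound-main} with plain \Clique{} as the \NP-hard source, pad instances to equalize vertex counts and clique size, and verify that the Bodlaender--Jansen--Kratsch composition (with $|A|=t$ and $|B|=\poly(n')$) meets the three conditions of \Cref{def:lin-comp}, using $B$ as the small vertex cover for condition~(ii) and the $A$--$B$ structure for the size bound in condition~(i). The one small imprecision is your claim that edges run \emph{only} between $A$ and $B$; the construction also has edges inside $B$, but these contribute only $\poly(n')$ additional edges (independent of $t$), so the $t\cdot\poly(\log t)\cdot\poly(n')$ bound and the rest of the argument are unaffected.
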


Note that the lower bound in~\cref{cor:vc} also holds for all parameters listed in~\cref{thm:UpperBound} except for bandwidth, cutwidth, and maximum degree,
since all these parameters are bounded by a polynomial in the vertex cover number of the graph (see also~\cref{fig:paramhier}). For the remaining parameters, we observe that the disjoint union yields a linear OR composition:
\begin{corollary}
\label{cor:params}
\textsc{Clique} parameterized by bandwidth, cutwidth, or maximum degree, does not admit a polynomially sized Turing kernel with $O(n^{1 - \vareps }) $ queries, where $n$ is the size of the input graph.
\end{corollary}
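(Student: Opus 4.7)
By \cref{thm:lower-bound-main}, it suffices to give, for each parameter $p \in \{\mathrm{bw}, \mathrm{cw}, \Delta\}$, a linear OR composition from some $\NP$-hard problem to \Clique{} parameterized by $p$. The plan is to take $L = \Clique{}$ and realize the composition as a \emph{padded disjoint union}. Given $t$ instances $(G_1, k_1), \dots, (G_t, k_t)$ of \Clique{}, I would first modify each $G_i$ so that its clique number is shifted to a common target $k^*$, and then take the disjoint union of the modified graphs.

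For the padding step, set $k^* \ceq \max_i k_i$ and attach to each $G_i$ a set of $k^* - k_i$ new vertices, each adjacent to every vertex of $G_i$ and to each other. Since each universal-vertex addition increases the clique number by exactly one, the resulting graph $G_i'$ satisfies $\omega(G_i') = \omega(G_i) + (k^* - k_i)$, so $G_i'$ contains a $k^*$-clique iff $G_i$ contains a $k_i$-clique. The composition outputs $(G, k^*)$, where $G \ceq \bigsqcup_i G_i'$ is the disjoint union. Property~(iii) of \cref{def:lin-comp} is then immediate: $G$ has a $k^*$-clique iff some $G_i'$ does, iff some $G_i$ has a $k_i$-clique.

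For properties (i) and (ii), let $n^* \ceq \max_i n_{G_i}$. The graph $G$ has $O(t \cdot n^*)$ vertices and $O(t \cdot (n^*)^2)$ edges, so $|(G, k^*)|$ is of the required form $t \cdot \poly(\log t) \cdot \poly(\max_i |x_i|)$. Next, bandwidth, cutwidth, and maximum degree are each equal to the maximum of the respective parameter over the connected components, so it suffices to bound the parameter of a single $G_i'$. Here the maximum degree and the bandwidth are each at most $n^* + k^* \le 2 n^*$ (the latter by placing the padding vertices after an optimal layout of $G_i$), and the cutwidth is at most the edge count $O((n^*)^2)$. All three bounds are polynomial in $\max_i |x_i|$, establishing (ii).

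The only subtlety I anticipate is the need to unify the varying $k_i$'s: a plain disjoint union would have no single clique target. The universal-vertex padding circumvents this while only polynomially inflating the three parameters, which is exactly what \cref{def:lin-comp} permits. Feeding this linear OR composition into \cref{thm:lower-bound-main} then yields \cref{cor:params}.
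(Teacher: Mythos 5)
Your proposal matches the paper's proof: the paper also adds apex (universal) vertices to the smaller instances to equalize the clique target and then takes a disjoint union, invoking \cref{thm:lower-bound-main}. You simply spell out the details the paper leaves implicit (that the padding vertices must be mutually adjacent so the clique number shifts by exactly $k^* - k_i$, and that bandwidth, cutwidth, and maximum degree are each the maximum over connected components and stay polynomial in $\max_i |x_i|$), so the two arguments are the same.
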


\begin{proof}
We can assume without loss of generality that all input instances search for a clique of the same size (if not, then we add apex vertices to all instances asking for smaller cliques). Taking the disjoint union is a linear OR composition. The statement then follows from \Cref{thm:lower-bound-main}.
\end{proof}

Combining \Cref{cor:vc} with \Cref{cor:params}, together with the observation that all remaining parameters are polynomial bounded by at least one of the parameters listed in these corollaries, completes the proof of \Cref{thm:LowerBound}.

\section{Upper Bounds}
\label{sec:upper-bounds}

In this section, we complement our lower bounds of the previous section by presenting polynomial-size Turing kernels (in fact, OR kernels) for \prob{Clique} that use a sublinear number of queries. In particular, we provide a complete proof of \cref{thm:UpperBound}.

We begin by describing a general ``query batching" technique for obtaining OR-kernels with sublinear number of queries that is especially suitable for \prob{Clique}. The first component of this technique is the following notion of \emph{trivial OR compositions}.
\begin{definition}
Let~$L$ be a parameterized problem. A \emph{trivial OR composition} for~$L$ is an algorithm that takes~$t$ instances~$(\inst_1,p_1),\dots,(\inst_t,p_t)$ of~$L$ and computes in time~$\poly(\sum_i (|\inst_i|+p_i))$ an instance~$(\inst,p)$ of~$L$ such that
$(\inst,p)\in L$ if and only if~$(\inst_i,p_i)\in L$ for some~$i\in\set{t}$.
\end{definition}

For all parameters we consider in this section, \prob{Clique} has a trivial OR composition by taking the disjoint union of input graphs (by adding apex vertices, we may assume that all instances search for a clique of the same size). This is formalized as follows.
 
\begin{observation}
\label{obs:cliquecompose}
Let~$G_1,\ldots,G_t$ be~$t$ graphs, and let~$G$ be the disjoint union of~$G_1,\ldots,G_t$.
Then,
$G$ has a $k$-clique if and only if~$G_i$ has a $k$-clique for some~$i\in\set{t}$, for any $k\in\N$.
\end{observation}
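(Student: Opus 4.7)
The plan is to prove both directions of the equivalence directly from the definition of disjoint union. The main point is that in a disjoint union, any edge of $G$ is inherited from exactly one of the $G_i$, so cliques cannot span multiple components.

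For the ``if'' direction, I would observe that each $G_i$ appears as an induced subgraph of $G$, so any $k$-clique in some $G_i$ is also a $k$-clique in $G$.

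For the ``only if'' direction, suppose $K \subseteq V(G)$ is a $k$-clique in $G$. The plan is to show $K \subseteq V(G_i)$ for some $i \in \{1, \dots, t\}$. Assume for contradiction that there exist $u, v \in K$ with $u \in V(G_i)$ and $v \in V(G_j)$ for some $i \neq j$. Since $G$ is the disjoint union of $G_1, \dots, G_t$, the edge set of $G$ is $\bigcup_{\ell=1}^{t} E(G_\ell)$, and in particular no edge of $G$ has one endpoint in $V(G_i)$ and the other in $V(G_j)$ for $i \neq j$. Hence $\{u, v\} \notin E(G)$, contradicting that $K$ is a clique. Therefore all vertices of $K$ lie in a common $V(G_i)$, and since $G_i$ is an induced subgraph of $G$, the set $K$ is a $k$-clique in $G_i$.

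There is no real obstacle here; the statement is essentially a restatement of the definition of disjoint union. The only thing to be careful about is to keep the argument parameter-free in $k$, so that the equivalence holds for every $k \in \mathbb{N}$ simultaneously, which the above argument does since it never uses a specific value of $k$.
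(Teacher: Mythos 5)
Your proof is correct, and since the paper states this as an \emph{observation} without a written proof (treating it as self-evident from the definition of disjoint union), your argument simply spells out the standard reasoning the authors had in mind: cliques cannot cross connected components, and each $G_i$ sits as an induced subgraph of $G$. Nothing is missing and nothing differs in substance from what the paper intends.
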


The following lemma encapsulates the connection between problems with trivial OR compositions and OR-kernels with sublinear number of queries.

\begin{lemma}
\label{lem:logshave}
Let~$L$ be some parameterized language with parameter~$p$ such that
\begin{enumerate}[(i)]
\item $L$ admits a trivial OR composition, and
\item $L$ is solvable in $2^{\poly(p)} \cdot \poly(n)$. 
\end{enumerate}
Then,
on any input instance~$(I,p)$ of encoding length~$n$,
if $L$ admits a polynomial-size \ORK{} with~$q(I)\in\N$ many queries,
then $L$ also admits a polynomial-size \ORK{} with~$\lceil q(I)/\log^c(n) \rceil$ queries, 
for any $c \in \mathbb{N}$.
\end{lemma}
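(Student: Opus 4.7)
The plan is to split into two regimes based on whether $n$ is large or small compared to $p$, invoking condition~(ii) in one regime and the trivial OR composition of condition~(i) in the other. Let $f(p)=\poly(p)$ be the size bound of the given OR-kernel, and let $\alpha$ be a constant such that $L$ is solvable in time $2^{p^\alpha}\cdot n^{O(1)}$. I would set the threshold $T(p):=2^{f(p)^\alpha}$ and proceed differently depending on whether $n\geq T(p)$ or $n<T(p)$.

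In the large-$n$ regime ($n\geq T(p)$), I would first run the assumed OR-kernel to obtain its $q(I)$ queries $(y_i,p_i)$, each satisfying $|y_i|+p_i\leq f(p)$. Condition~(ii) then solves each query in time $2^{p_i^\alpha}\cdot\poly(|y_i|)\leq 2^{f(p)^\alpha}\cdot\poly(f(p))\leq n\cdot\poly(n)=\poly(n)$, and since $q(I)=\poly(n)$ the answer to $(I,p)$ is determined in polynomial time; if some query is YES, output that single query, otherwise output the empty list. This uses at most one query, trivially at most $\lceil q(I)/\tlog{n}[c]\rceil$.

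In the small-$n$ regime ($n<T(p)$), we have $\log n<f(p)^\alpha$ and hence $\tlog{n}[c]\cdot f(p)<f(p)^{\alpha c+1}=\poly(p)$. I would group the $q(I)$ queries into $\lceil q(I)/\tlog{n}[c]\rceil$ batches of size at most $\tlog{n}[c]$ each, and apply the trivial OR composition from~(i) to each batch. Each batch composes in polynomial time into a single instance of $L$ whose size is polynomial in the batch's total input size $\tlog{n}[c]\cdot f(p)=\poly(p)$, so it remains a valid query. Because the composition preserves the disjunction within each batch, the OR of the new queries equals the OR of the original ones, which by the OR-kernel property equals the answer to $(I,p)$.

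The main obstacle I anticipate is aligning the threshold so that the two regimes jointly cover all inputs. Direct-solving becomes polynomial-time exactly when $\log n\geq f(p)^\alpha$, whereas batching keeps queries polynomial-sized exactly when $\tlog{n}[c]\cdot f(p)\leq\poly(p)$, i.e., when $\log n\leq\poly(p)$. Both thresholds are governed by the same quantity $f(p)^\alpha$, so picking $T(p)=2^{f(p)^\alpha}$ makes the regimes meet without gap. A minor secondary concern is exhibiting an explicit YES-query in the large-$n$ regime when the answer is YES, which I handle simply by reusing one of the verified YES-queries produced by the given OR-kernel.
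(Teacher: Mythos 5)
Your proof is correct and follows essentially the same two-regime strategy as the paper: a threshold on the relationship between $p$ and $\log n$, with the ``small parameter'' side handled by solving directly (condition~(ii)) and the ``large parameter'' side handled by running the given OR-kernel, batching its queries into groups of size $\log^c n$, and applying the trivial OR composition to each batch. The only differences are cosmetic: the paper sets the threshold as $p \leq \log^{1/d} n$ where $p^d$ bounds the query size, and in the direct-solve regime it solves the original instance $(I,p)$ in $2^{\poly(p)}\poly(n)$ time; you instead set $\log n \geq f(p)^\alpha$ and solve each of the $q(I)$ kernel queries. Your version is in fact slightly more careful — by tying the threshold explicitly to $f(p)^\alpha$ (combining the query-size bound $f$ with the solving-time exponent $\alpha$), you avoid the paper's implicit assumption that the query-size exponent $d$ dominates the solving-time exponent, a point the paper glosses over but which can always be arranged by taking $d$ large enough without loss of generality. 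The batching side of your argument, including the bound $\log^c n \cdot f(p) < f(p)^{\alpha c + 1} = \poly(p)$, matches the paper's computation $\poly(p^d \log^c n) = \poly(p)$ exactly.
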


\begin{proof}
Suppose $L$ admits a a polynomial-size \ORK{} with~$q(I)$ queries
for any input instance~$(I,p)$ of encoding length~$n$. 
We show how to construct a a polynomial-size \ORK{} with~~$\lceil q(I)/\log^c(n) \rceil$ many queries, for any $c \in \mathbb{N}$. 
Let~$ d\in \N $ such that the OR kernel queries instances of size at most~$O(p^d)$.
Observe that if $p \leq \log^{\frac{1}{d}} n$, then one can solve $(\inst,p)$ in $2^{O (\log n)} \cdot \poly(n) = \poly(n)$ time, and we are done. So assume  $p > \log^{\frac{1}{d}}(n)$. 

Apply the given \ORK{} on $(\inst,p)$ to obtain~$q\ceq q(I)$ queries, 
denoted by~$(\inst_1,p_1), \ldots, (\inst_{q},p_{q})$, 
where~$|\inst_i|+p_i\leq O(p^d)$ for every~$i\in\set{q}$ and some $d \in\mathbb{N}$.
Form a partition of size~$r\ceq r(I)\leq \ceil{q(I)/\log^c (n)}$ on the output instances, such that each part contains at most~$\log^c (n)$ output instances.
Employ the trivial OR composition to each part to obtain the instances~$(J_1,p_1),\dots,(J_{r},p_{r})$, where~$|J_i|+p_i\leq \poly(p^d \log^c (n)) = \poly (p)$ for every~$i\in\set{r}$. 
Note that~$(\inst,p)$ is a \yes-instance if and only if $(J_i,p_i)$ is a \yes-instance for some~$i\in\set{r}$.
\end{proof}

We note that the lemma also applies to languages solvable in $f(p) \cdot \poly(n)$, 
for $f(p) = \omega(2^{\poly(p)})$, 
but this would result in a larger number of queries 
(albeit still asymptotically less then $q(I)$). 
Nevertheless, \cref{lem:logshave} as stated above suffices for our purposes. 

Note that \prob{Clique} is solvable in $2^{\poly(p)} \cdot \poly(n)$ time, for all parameters $p$ mentioned in \cref{thm:UpperBound}. Thus, together with \cref{obs:cliquecompose}, we have that \prob{Clique} satisfies both conditions of the \cref{lem:logshave} for all parameters mentioned in \cref{thm:UpperBound}. Below we show that \prob{Clique} admits polynomial-sized OR-kernels with a linear number of queries for each of these parameters, thus completing the proof of \cref{thm:UpperBound}.

\subsection{Degeneracy}

An undirected graph~$G$ has degeneracy~$p$ if every subgraph of~$G$ contains a vertex of degree at most~$p$. The degeneracy ordering of the vertices of a graph~$G$ with degeneracy~$p$ is an ordering of the vertices where every vertex has at most~$\dgn$ neighbors of higher index. The degeneracy ordering can be computed in linear time~\cite{MB83}. We prove the following.

\begin{lemma}
\label{lem:degen}
\Clique{} admits an~$O(\dgn^2)$\hyp size \ORK{} with at most~$n_G$ queries, where~$p$ denotes the degeneracy of the input graph~$G$.
\end{lemma}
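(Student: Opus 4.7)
The plan is to exploit the degeneracy ordering to reduce \Clique{} on $G$ to searching for a clique in each vertex's forward neighborhood. First I would compute a degeneracy ordering $v_1,\dots,v_{n_G}$ of $G$ in linear time using~\cite{MB83}; by definition, each $v_i$ has at most $\dgn$ neighbors in $\{v_{i+1},\dots,v_{n_G}\}$. For each $i \in \{1,\dots,n_G\}$, let $N^+(v_i)$ denote this set of forward neighbors, and output the query $(G_i, k)$, where $G_i \ceq G[\{v_i\} \cup N^+(v_i)]$.

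Correctness rests on the standard observation that any $k$-clique $C$ in $G$ must have a vertex of smallest index in the degeneracy ordering, call it $v_i$; the remaining $k-1$ vertices of $C$ all lie in $\{v_{i+1},\dots,v_{n_G}\}$ and are adjacent to $v_i$, so they lie in $N^+(v_i)$. Hence $C \subseteq V(G_i)$ and $G_i$ contains a $k$-clique. Conversely, any $k$-clique in some $G_i$ is trivially a $k$-clique in $G$. Thus $(G,k)$ is a \yes-instance if and only if $(G_i,k)$ is a \yes-instance for some $i \in \{1,\dots,n_G\}$, which verifies the OR-kernel condition in \cref{def:ORkernel}.

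For the size bound, each $G_i$ has at most $\dgn + 1$ vertices and therefore at most $\binom{\dgn+1}{2} = O(\dgn^2)$ edges, so its encoding length together with the parameter $k \le \dgn+1$ is $O(\dgn^2)$. The number of queries is exactly $n_G$, and constructing them all takes polynomial time since the degeneracy ordering and each forward neighborhood can be read off in linear time. No real obstacle arises here; the only thing to be slightly careful about is to include $v_i$ itself in $G_i$ (so that cliques containing $v_i$ are recovered) and to take neighbors only of \emph{higher} index (so that the size bound $\dgn+1$ holds uniformly).
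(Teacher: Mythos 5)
Your proof is correct and follows essentially the same approach as the paper: compute a degeneracy ordering in linear time, and for each vertex query the subgraph induced by that vertex together with its forward (higher-index) neighbors. The only additional detail you supply is the explicit note that $k \le \dgn+1$ may be assumed, which is a harmless refinement the paper leaves implicit.
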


\begin{proof}
Let~$(G,k)$ be an instance of~\prob{Clique} where~$G$ contains~$n_G$ vertices.
Compute the degeneracy~$\dgn$ of~$G$ degeneracy order~$\sigma$ in linear time. For each vertex~$v\in V(G)$, query the instance~$(G_v,k)$, where~$G_v$ is the subgraph of~$G$ induced by the closed-neighborhood of~$v$ restricted to vertices of higher index. Clearly, $G$ has a $k$-clique if and only if $G_v$ has a $k$-clique for some~$v \in V(G)$. Moreover, each $G_v$ has $O(p)$ vertices and $O(p^2)$ edges.
\end{proof}

Note that \prob{Clique} is solvable in $2^{O(p)} n^{\O(1)}$ time, by exhaustively searching each of the subgraphs $G_v$ defined in the proof of \cref{lem:degen}. Thus, by \cref{lem:logshave}, we have the following.

\begin{corollary}
For every~$c \in \mathbb{N}$, \Clique{} admits a polynomial-size \ORK{} with at most $n_G/\tlog{n_G}[c]$ output instances, where~$p$ denotes the degeneracy of the input graph~$G$.
\end{corollary}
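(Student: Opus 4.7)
The plan is to derive this corollary as a direct application of Lemma~\ref{lem:logshave} to the baseline OR-kernel of Lemma~\ref{lem:degen}. Concretely, I would check that Clique parameterized by degeneracy satisfies both hypotheses of Lemma~\ref{lem:logshave} and then plug in $q(I) = n_G$ from Lemma~\ref{lem:degen} to shave off the desired polylogarithmic factor.

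For hypothesis~(i) I would invoke Observation~\ref{obs:cliquecompose}: the disjoint union is a trivial OR composition for Clique, and one immediately verifies that the degeneracy of a disjoint union equals the maximum degeneracy of its components, so the parameter is preserved (after the standard padding step that makes all instances search for a $k$-clique with the same~$k$, achieved by adding apex vertices to smaller targets, which still does not blow up the degeneracy by more than an additive constant). For hypothesis~(ii) I would point to the standard degeneracy-based FPT algorithm for Clique: compute the degeneracy ordering, and for each vertex brute-force over subsets of size $k-1$ among its at most $p$ higher-indexed neighbors, running in $2^{O(p)} \cdot \poly(n)$ time. This is essentially the same algorithm implicit in the proof of Lemma~\ref{lem:degen}.

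With both hypotheses verified, the corollary follows by applying Lemma~\ref{lem:logshave} to the OR-kernel of Lemma~\ref{lem:degen}, yielding an OR-kernel with at most $\lceil n_G / \log^c(n) \rceil$ queries for any prescribed $c \in \mathbb{N}$. A final sentence reconciles the encoding-length logarithm appearing in Lemma~\ref{lem:logshave} with the vertex-count logarithm in the corollary: since $n = O(n_G + m_G \log n_G) = O(n_G^2 \log n_G)$, we have $\log n = \Theta(\log n_G)$, so $\log^c n = \Theta(\log^c n_G)$ and the stated bound $n_G / \log^c n_G$ is equivalent up to constants.

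There is no real obstacle here; the whole argument is essentially one line of bookkeeping once the preceding lemma and observation are in hand. The only mild care required is in checking that disjoint union preserves degeneracy so that the trivial OR composition genuinely produces an instance with the same parameter value, and in the harmless reconciliation between $\log n$ and $\log n_G$.
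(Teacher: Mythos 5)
Your proof is correct and takes exactly the paper's route: apply Lemma~\ref{lem:logshave} to the OR-kernel of Lemma~\ref{lem:degen}, using Observation~\ref{obs:cliquecompose} for hypothesis~(i) and the $2^{O(p)}\poly(n)$-time degeneracy-based algorithm for hypothesis~(ii). Your extra bookkeeping (degeneracy of a disjoint union, the $\log n$ versus $\log n_G$ reconciliation, and the note that all queries $(G_v,k)$ already share the same $k$ so no apex padding is even needed here) is harmless and matches the paper's implicit reasoning.
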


\begin{remark}
  For graphs with $\omega (n_G^{1 + \delta})$ for some constant~$\delta > 0$ many edges, this \ORK{} uses truly sublinear many queries.
  This is no contradiction to \cref{thm:LowerBound} because \cref{thm:LowerBound} only excludes Turing kernels using a sublinear number of queries on \emph{every} input (in particular, looking in the proof of \cref{thm:LowerBound} for degeneracy and some fixed~$\epsilon > 0$, one sees that the assumed Turing kernel is applied to an instance~$(G, k)$ with~$O(n_G^{1 + \frac{\epsilon}{2}})$ many edges). 
\end{remark}

\subsection{Odd cycle transversal}

The odd cycle transversal number of a graph~$G$ is size $p$ of the smallest set~$X \subseteq V(G)$ such that~$G-X$ is bipartite.
The respective \prob{Odd Cycle Transversal} problem is solvable in~$\O(3^p \cdot n_G \cdot m_G)$ time~\cite{ReedSV04,Huffner09}, and hence it is polynomial-time solvable if~$p = O(\tlog{n_G})$. Moreover, the problem admits a polynomial-time $O(\sqrt{\tlog{n_G}})$-factor approximation~\cite{DBLP:conf/stoc/AgarwalCMM05} (see also \cite[Lemma 3.3]{KW14}).
It follows that it admits an polynomial-time $O(\sqrt{p})$-factor approximation in the case of~$p=\Omega(\tlog{n_G})$~\cite[Lemma 3.3]{KW14}.
\begin{lemma}
\label{lem:OCT}
\Clique{} admits a~$\O(\oct^3)$\hyp size \ORK{} with at most~$m_G$ output instances,
where~$\oct$ denotes the odd cycle transversal number of the input graph~$G$.
\end{lemma}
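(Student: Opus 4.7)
The plan is to exploit the fact that $G-X$ is bipartite for any odd cycle transversal~$X$, so every clique $C\subseteq V(G)$ contains at most two vertices of $V(G)\setminus X$. Consequently, for any $k$-clique $C$ with $k\ge 2$ there is an edge $uv\in E(G[C])$ such that $C\setminus\{u,v\}\subseteq X$, and, since $C$ is a clique, those remaining vertices all lie in $N(u)\cap N(v)\cap X$---a set of size at most $|X|$. This suggests the OR-kernel that, for each edge $uv\in E(G)$, outputs the query $(H_{uv},k)$ with
\[
H_{uv}\,\ceq\,G\bigl[\{u,v\}\cup(N(u)\cap N(v)\cap X)\bigr],
\]
producing exactly $m_G$ queries.

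To realise this within polynomial time while keeping the queries small, I would first compute an~$X$ with $|X|=O(\oct^{3/2})$. If $\oct=O(\tlog{n_G})$, the exact $O(3^\oct\cdot n_G\cdot m_G)$-time algorithm of Reed, Smith, and Vetta runs in polynomial time and yields an optimal~$X$ of size~$\oct$; otherwise, the polynomial-time $O(\sqrt{\oct})$-approximation recalled just before the lemma returns an~$X$ with $|X|=O(\oct^{3/2})$. In either regime each $H_{uv}$ has at most $|X|+2 = O(\oct^{3/2})$ vertices and thus $O(\oct^3)$ edges, matching the claimed $\O(\oct^3)$ size.

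For correctness, every $H_{uv}$ is an induced subgraph of~$G$, so a $k$-clique in any $H_{uv}$ is a $k$-clique in~$G$. Conversely, let $C$ be a $k$-clique in~$G$ (the cases $k\le 1$ are trivial and can be handled without queries). I would pick $uv\in E(G[C])$ by the three subcases $|C\setminus X|\in\{0,1,2\}$: any edge of $G[C]$ if $|C\setminus X|=0$; an edge from the unique non-$X$ vertex of $C$ to any other vertex of $C$ if $|C\setminus X|=1$; and the edge between the two non-$X$ vertices of $C$ if $|C\setminus X|=2$. In each case $C\setminus\{u,v\}\subseteq X$, and because $C$ is a clique also $C\setminus\{u,v\}\subseteq N(u)\cap N(v)$, so $C\subseteq V(H_{uv})$ and $H_{uv}$ contains a $k$-clique.

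The main obstacle is the preprocessing step of producing an OCT set of size $O(\oct^{3/2})$ in polynomial time: \prob{Odd Cycle Transversal} is only fixed-parameter tractable, so neither the exact algorithm nor the approximation suffices in every regime of~$\oct$, and the two have to be combined based on the relative size of~$\oct$ and~$\tlog{n_G}$. Once~$X$ is in hand, the rest of the argument is the short structural case analysis above.
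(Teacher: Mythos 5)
Your proof is correct and follows essentially the same high-level route as the paper's: use the polynomial-time $O(\sqrt{\oct})$-factor approximation for odd cycle transversal when $\oct$ is super-logarithmic (and solve the instance directly, or compute an exact transversal, when $\oct=O(\tlog{n_G})$), then produce one query per edge, each query containing the two endpoints plus $O(\oct^{3/2})$ transversal vertices, giving $O(\oct^3)$-size queries. The concrete query construction differs slightly: the paper enumerates only the edges $e$ of $G-X$ and queries $G[e\cup X]$, whereas you enumerate \emph{all} edges $uv$ of $G$ and query the pared-down $G[\{u,v\}\cup(N(u)\cap N(v)\cap X)]$. Your variant has a small advantage in rigor: enumerating edges of $G-X$ alone can, in principle, miss a $k$-clique $C$ that has at most one vertex $u$ outside $X$ with $u$ isolated in $G-X$ (then no query $G[e\cup X]$ contains $u$), whereas your case analysis on $|C\setminus X|\in\{0,1,2\}$ over all edges of $G$ covers every $k$-clique with $k\ge 2$ cleanly. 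The query count is still $m_G$ and the size is still $O(\oct^3)$, so nothing is lost, and the combination with \cref{lem:logshave} proceeds identically.
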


\begin{proof}
Let~$(G,k)$ be an input instance of~\Clique{} with~$m_G$ edges, and let $p$ be the odd cycle traversal number of $G$.
If $p \leq \tlog{n_G}$ then we can determine whether $G$ has a $k$-clique in polynomial-time.
So assume $p> \tlog{n_G}$.
Compute in polynomial-time an~$O(\sqrt{p})$-factor approximate minimum odd cycle transversal, and let $X$ denote the resulting solution of size at most~$O(p^{1.5})$. Let~$G-X$ consist of the edges~$e_1,\dots,e_{m'}$, where~$m'\leq m_G$.
For each~$i\in\set{m'}$, compute the query~$G[e\cup X]$. Clearly, since $G-X$ is bipartite, $G$ has a $k$-clique if and only if $G[e\cup X]$ has a $k$-clique for some $e \in \{e_1,\dots,e_{m'}\}$. Moreover, each query $G[e\cup X]$ has $O(p^{1.5})$ vertices and $O(p^3)$ edges.
\end{proof}

Observe that \prob{Clique} is solvable in $2^{O(p)} \poly (n)$ time by exhaustively searching through each subgraph $G[e\cup X]$ defined in the proof of \cref{lem:OCT}.
By \cref{lem:logshave}, we therefore have the following:

\begin{corollary}
For every~$c\geq0$, \Clique{} admits a polynomial-size \ORK{} with at most $m_G/\tlog{m_G}[c]$ queries, where~$p$ denotes the odd cycle transversal number of the input graph~$G$.
\end{corollary}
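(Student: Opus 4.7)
The plan is to invoke \cref{lem:logshave} on the \ORK{} constructed in \cref{lem:OCT}, so the main task is to verify the two hypotheses of \cref{lem:logshave} for \prob{Clique} parameterized by the odd cycle transversal number~$\oct$.

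First I would check hypothesis~(i): that \prob{Clique} parameterized by~$\oct$ admits a trivial OR composition. This is immediate from \cref{obs:cliquecompose}, after a standard preprocessing step that makes all input instances ask for a clique of the same size by padding with apex vertices; taking the disjoint union of the resulting graphs keeps the odd cycle transversal number bounded by~$\sum_i p_i$, which is fine because a trivial OR composition puts no restriction on the parameter of the composed instance beyond being an instance of the same problem.

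Next I would check hypothesis~(ii): that \prob{Clique} is solvable in $2^{\poly(\oct)} \cdot \poly(n)$ time. For this, I would reuse the construction from the proof of \cref{lem:OCT}: compute in polynomial time an $O(\sqrt{\oct})$-approximate odd cycle transversal~$X$ of size $O(\oct^{1.5})$, iterate over the edges $e$ of $G-X$, and brute-force search each subgraph~$G[e\cup X]$ for a $k$-clique. Each such subgraph has $O(\oct^{1.5})$ vertices, so each brute-force search takes $2^{O(\oct^{1.5})} \poly(\oct)$ time, and the total running time is $2^{O(\oct^{1.5})} \cdot \poly(n) = 2^{\poly(\oct)} \cdot \poly(n)$. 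Since $G-X$ is bipartite, a maximum clique in $G$ is attained in some~$G[e\cup X]$, so this correctly decides \prob{Clique}.

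With both hypotheses verified, \cref{lem:logshave} applied to the \ORK{} of \cref{lem:OCT} (which uses at most $q(I) \le m_G$ queries) yields a polynomial-size \ORK{} with at most $\lceil m_G / \tlog{n}[c] \rceil$ queries, for every constant~$c$. The only subtlety is converting the $\tlog{n}[c]$ from \cref{lem:logshave} into the $\tlog{m_G}[c]$ claimed in the corollary; this is the one place where I need to be a bit careful. Since the encoding length $n = \Omega(m_G)$, we have $\log n \ge \log m_G$, hence $m_G/\log^c n \le m_G/\log^c m_G$, which is already sharper than the stated bound. (If $m_G$ is so small that $\log m_G = o(\log n)$, then either the parameter~$\oct$ is already $O(\log n_G)$ and the instance is solved in polynomial time, or the bound is trivially met; in any case the claimed query count holds.) This completes the proof.
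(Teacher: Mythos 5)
Your proof is correct and follows the same route the paper takes: verify the two hypotheses of \cref{lem:logshave} and apply it to the \ORK{} of \cref{lem:OCT}. You are actually slightly more careful than the paper in two harmless places — you get $2^{O(\oct^{1.5})}\poly(n)$ (rather than the paper's stated $2^{O(\oct)}\poly(n)$, which implicitly uses the exact rather than the approximate transversal) and you explicitly reconcile the $\tlog{n}[c]$ of \cref{lem:logshave} with the $\tlog{m_G}[c]$ in the statement; both points land where the paper intends.
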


\subsection{Distance to bounded degree}

For an undirected graph~$G$, the distance of $G$ to graphs of bounded degree~$d$ of $G$ is the smallest number~$p$ such that there is a set~$X\subseteq V(G)$ of size~$p$ with~$G-X$ having maximum degree~$d$.
We begin by presenting an~$O(p)$-factor approximation algorithm for finding the smallest such set~$X \subseteq V(G)$.

\begin{lemma}
\label{lem:approxdbd}
Let~$d\in\N$. There is an algorithm that, given an undirected graph~$G$ and an integer~$p\in\N$, in polynomial-time either returns a set~$X$ of size at most~$p\cdot (p+d+1)$ such that~$G-X$ has maximum degree at most~$d$, or correctly concludes that there is no such set of size at most~$p$. 
\end{lemma}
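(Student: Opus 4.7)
The plan is to give a simple iterative peeling algorithm. It maintains a set $X$, initially empty, and while $G-X$ contains a vertex $v$ with $\deg_{G-X}(v) > d$, it adds vertices to $X$ based on a degree threshold: if $\deg_{G-X}(v) > d+p$, it commits only $v$ to $X$, because any solution $S$ of size at most $p$ must contain $v$ (otherwise at least $\deg_{G-X}(v) - d > p$ of its neighbors would lie in $S$); if $d < \deg_{G-X}(v) \leq d+p$, it commits the entire set $\{v\} \cup N_{G-X}(v)$, which has at most $d+p+1$ elements. The loop is executed at most $p$ times. If $G-X$ still contains a vertex of degree exceeding $d$ after $p$ rounds, the algorithm returns ``no''; otherwise it returns $X$.

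The correctness proof I have in mind is a charging argument. Fix a hypothetical solution $S$ with $|S| \leq p$. I would show, by induction on the iteration count, that each iteration deposits a fresh element of $S$ into $X$, so after $i$ iterations $|S \cap X| \geq i$. The high-degree case is immediate since we already argued $v \in S$. For the low-degree case, assume $v \notin S$; then $\deg_{G-S}(v) \leq d$, which forces at least $\deg_{G-X}(v) - d \geq 1$ elements of $N_{G-X}(v)$ to lie in $S$. Crucially, $\{v\} \cup N_{G-X}(v)$ is disjoint from the current $X$ by definition of $G-X$, so the $S$-vertex caught in this iteration has not previously been counted. After $p$ iterations this gives $|S \cap X| \geq p \geq |S|$, hence $S \subseteq X$, and then $G-X$ is a subgraph of $G-S$ and has maximum degree at most $d$. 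Contrapositively, if the algorithm reaches the end of the loop with a high-degree vertex still present, no solution of size $\leq p$ can exist, and the ``no'' answer is justified.

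The size bound is immediate: at most $p$ iterations, each adding at most $d+p+1$ vertices, yield $|X| \leq p(p+d+1)$. Each iteration only updates degrees in the residual graph and picks a witness of degree exceeding $d$, both of which can be done in polynomial time, so the overall running time is polynomial. The main obstacle I foresee is nailing down the freshness aspect of the charging scheme — guaranteeing that we do not double-count the same $S$-vertex across iterations — but this is handled exactly by the observation that the set added to $X$ in any given iteration is, by construction, disjoint from the current $X$, so any $S$-vertex it contains must be new.
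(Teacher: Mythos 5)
Your proof is correct. The size bound, the running time, and—most importantly—the charging argument are all sound: in the high-degree case $v$ must lie in any solution $S$ (else $|N_{G-X}(v)\cap S| > p$); in the low-degree case $\{v\}\cup N_{G-X}(v)$ must meet $S$; and since the set added in each iteration is drawn from $V(G)\setminus X$, the $S$-vertex you hit is genuinely fresh. After $p$ iterations either the loop has stopped (so $X$ works) or $|S\cap X|\geq p$ would force $S\subseteq X$ and hence max degree $\leq d$ in $G-X$, contradicting that the loop is still running—so the \no{} answer is safe.

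Your route differs from the paper's. The paper gives a \emph{static}, two-stage classification: first, every vertex of degree at least $p+d+1$ is forced into the solution (if more than $p$ such vertices exist, report \no{}); after deleting them the residual graph has maximum degree at most $p+d$, and a counting argument bounds the number of remaining vertices of degree exceeding $d$ by roughly $p(p+d)$ (if more, report \no{}); the union of these two sets is returned. You instead give an \emph{iterative peeling} procedure that processes one high-degree witness per round, committing either the vertex itself or its closed $(G-X)$-neighborhood, and justifies termination by a fresh-element charging argument against a hypothetical solution. Both reach the same $p(p+d+1)$ bound. The paper's version is shorter to state and analyzes the final set in one shot; yours is closer in flavor to local-ratio/branching-style kernelization arguments and makes the per-round progress against $S$ explicit, which some readers may find easier to verify. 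Either is acceptable here.
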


\begin{proof}
Every vertex of degree at least~$p+d+1$ in $G$ must be contained in any solution. Thus, if there are more than~$p$ such vertices, we can correctly report \no{} (\emph{i.e.}, that the distance to bounded degree $d$ is larger than~$p$). Otherwise, we know that the maximum degree is at most~$p+d$ in the subgraph~$G'$ obtained by deleting (up to~$p$) vertices of degree at least~$p +d + 1$ from $G$. If there are more than~$p\cdot (p+d)$ vertices of degree larger than~$d$ remaining in~$G'$, then the distance of $G$ must larger than~$p$, and we can correct report \no. Thus, there are at most~$p\cdot (p+d)$ vertices of degree larger than~$d$, and the set of these vertices is a $(p+d+1)$-approximation.
\end{proof}

\begin{lemma}
For any~$d\in\N$, \Clique{} admits an~$O(p^4)$-size \ORK{} with at most $n_G$ queries, where~$p$ denotes the distance of the input graph~$G$ to graphs of maximum degree~$d$.
\end{lemma}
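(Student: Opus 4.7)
My plan is to mirror the patterns of \Cref{lem:degen} and \Cref{lem:OCT}: first compute a small ``deletion set'' $X$ with \Cref{lem:approxdbd}, then emit one query per vertex outside $X$, where each query augments the closed neighborhood of that vertex with all of $X$.

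First, I would apply \Cref{lem:approxdbd} to $(G, p)$. Either it correctly concludes that no deletion set of size at most $p$ exists (in which case we may output any trivially-rejecting query), or it returns a set $X \subseteq V(G)$ of size at most $p(p + d + 1) = O(p^2)$ such that $G - X$ has maximum degree at most $d$ (we treat $d$ as a fixed constant). If $V(G) \setminus X = \emptyset$, then $n_G = O(p^2)$ and we output the single query $(G, k)$, which has encoding size $O(p^4)$. Otherwise, for every $v \in V(G) \setminus X$ we output the query $(G[N[v] \cup X], k)$. Since $v \notin X$, the neighbors of $v$ in $G$ consist of at most $|X|$ vertices from $X$ and at most $d$ vertices from $G - X$, so
\[
|N[v] \cup X| \,\leq\, 2|X| + d + 1 \,=\, O(p^2),
\]
and hence each query has $O(p^4)$ edges and encoding size $O(p^4)$. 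The total number of queries is $|V(G) \setminus X| \leq n_G$.

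For correctness, suppose $G$ has a $k$-clique $C$. If $C$ contains some vertex $v \notin X$, then $C \subseteq N[v] \subseteq N[v] \cup X$, so the query for $v$ contains $C$ and is a \yes-instance. If instead $C \subseteq X$, then $C \subseteq X \subseteq N[v] \cup X$ for every $v \in V(G) \setminus X$ (which is non-empty in this non-trivial case), so \emph{every} query contains $C$. Conversely, each query is an induced subgraph of $G$, so a \yes-answer to any query yields a $k$-clique in $G$. No step presents a real obstacle; the only minor subtlety is the boundary case $V(G) = X$, which is resolved by the single-query fallback above. The construction simply abstracts the idea behind \Cref{lem:OCT}: we pivot every candidate $k$-clique through a witness vertex outside the small deletion set, using the bounded degree of $G - X$ to keep each neighborhood query of size polynomial in $p$.
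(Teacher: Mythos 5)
Your proof is correct and follows essentially the same approach as the paper: apply \Cref{lem:approxdbd} to obtain the deletion set $X$, then issue one query per vertex $v$ outside $X$ consisting of $G[X \cup N[v]]$ (which equals the paper's $G[X \cup N_H[v]]$, since the part of $N(v)$ inside $X$ is absorbed by the union). Your handling of the boundary case $V(G)=X$ and the failure case of \Cref{lem:approxdbd} is slightly more explicit than the paper's, but the construction, size bound $O(p^2)$ vertices and $O(p^4)$ edges per query, and correctness argument are the same.
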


\begin{proof}
Let~$(G,k)$ be an instance of~\prob{Clique}, where~the distance to bounded degree~$d$ of~$G$ is~$p$. Iteratively, for~$p'\in\{1,\dots,p\}$, we apply~\cref{lem:approxdbd} to obtain a set~$X$ with~$|X|\leq p\cdot (p+d+1)$ such that~$H=G-X$ is of maximum degree~$d$. Next, for every~$v\in V(H)$, we query the instance~$(G_v,k)$, where~$G_v=G[X\cup N_H[v]]$. Clearly, $G$ has a $k$-clique if and only if $G$ has a $k$-clique for some $v \in V(H)$. Moreover,  observe that any~$G_v$ has at most~$p\cdot (p+d+1)+d+1=O(p^2)$ vertices, and at most $O(p^4)$ edges.
Thus, in total we query at most~$n_G$ instances each of size~$\O(p^4)$.
\end{proof}

\begin{corollary}
For any~$c,d\in\N$, \Clique{} admits a polynomial-size \ORK{} with at most $n_G/\tlog{n_G}[c]$ queries, where~$p$ denotes the distance of the input graph~$G$ to graphs of maximum degree~$d$.
\end{corollary}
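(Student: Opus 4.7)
The plan is to derive this corollary as a direct application of \cref{lem:logshave} to the previous lemma, exactly mirroring the structure used to obtain the corresponding corollaries for degeneracy and odd cycle transversal. Concretely, I want to combine the $O(p^4)$-size \ORK{} with $n_G$ queries from the preceding lemma with the shaving tool of \cref{lem:logshave} to gain the desired $\tlog{n_G}[c]$-factor on the query count.

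To invoke \cref{lem:logshave}, I first need to verify its two hypotheses for \Clique{} parameterized by the distance $p$ to bounded degree $d$. Condition (i), existence of a trivial OR composition, is already handed to us by \cref{obs:cliquecompose}: the disjoint union of graphs (after padding with apex vertices so that all instances request the same clique size) is a trivial OR composition for \Clique{} regardless of the parameterization considered, since distance to bounded degree is subadditive under disjoint union. Condition (ii), solvability in $2^{\poly(p)} \cdot \poly(n)$ time, is where I would spend a sentence: reusing the construction from the preceding lemma, one computes the set $X$ of size $O(p^2)$ via \cref{lem:approxdbd}, and then for each vertex $v \in V(G-X)$ enumerates all $2^{O(p^2)}$ subsets of $X \cup N_{G-X}[v]$ looking for a $k$-clique. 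This gives a $2^{O(p^2)} \cdot \poly(n_G)$ algorithm, which is comfortably $2^{\poly(p)} \cdot \poly(n)$.

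With both hypotheses in hand, the final step is mechanical: the preceding lemma furnishes, for every $d \in \N$, a polynomial-size \ORK{} with $q(I) \le n_G$ queries, so \cref{lem:logshave} upgrades this to a polynomial-size \ORK{} using $\lceil n_G / \tlog{n_G}[c]\rceil$ queries for any fixed $c \in \N$, which is the claimed bound. I do not anticipate any substantive obstacle here; the corollary is essentially a bookkeeping step, and the only thing worth double-checking is that the exponent $\poly(p)$ in the running-time hypothesis really absorbs the $p^2$ coming from the size of $X \cup N_{G-X}[v]$, which it trivially does.
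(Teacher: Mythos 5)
Your proposal is correct and matches the paper's (implicit) argument: the corollary follows by applying \cref{lem:logshave} to the preceding lemma, with \cref{obs:cliquecompose} supplying condition~(i) and a $2^{O(p^2)}\cdot\poly(n)$ exhaustive search over the $O(p^2)$-vertex subgraphs $G[X\cup N_{G-X}[v]]$ supplying condition~(ii). The verification of the running-time hypothesis is exactly the bookkeeping the paper performs for the analogous degeneracy and odd-cycle-transversal corollaries.
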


\subsection{Distance to chordal graph}

For an undirected graph~$G$, the distance of $G$ to chordal is the smallest number~$p$
such that there is a set~$X\subseteq V(G)$ of size~$p$ with~$G-X$ being chordal; that is,
every induced cycle in~$G-X$ is of length exactly three. We will make use of the following.

\begin{lemma}[\cite{JansenP18}]
\label{lem:chordal}
There is a polynomial-time algorithm that, given an undirected graph~$G$ and an integer~$p$,
either correctly concludes that~$G$ has distance to chordal graphs greater than~$p$, or computes a set $X$ of size~$q=O(p^4 \tlog{p}[2])$ such that $G-X$ is chordal. 
\end{lemma}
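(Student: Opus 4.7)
The plan is to give a polynomial-time approximation algorithm for \textsc{Chordal Vertex Deletion} that either certifies that $G$ has no chordal modulator of size $p$, or outputs one of size $O(p^4 \tlog{p}[2])$. The overall strategy combines an LP relaxation of the natural hitting-set formulation with iterative rounding that exploits the structure of near-chordal graphs.

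First, I would formulate the problem as hitting every induced cycle of length at least four (every \emph{hole}). The natural LP has a variable $x_v \in [0,1]$ for each vertex $v$, a constraint $\sum_{v\in C} x_v \geq 1$ for every hole $C$, and objective $\min \sum_v x_v$. Although there may be exponentially many constraints, one obtains a polynomial-time separation oracle as follows: weight each vertex by $x_v$ and run a chordality-certifying algorithm such as maximum cardinality search, which either declares $G$ chordal or produces a witness hole whose total weight can then be checked against $1$. With this oracle the ellipsoid method solves the LP in polynomial time; if the optimum exceeds $p$, output \no.

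Second, round iteratively in $O(\log p)$ phases. In each phase, move every vertex with $x_v$ above a threshold $\tau$ into $X$ (costing at most $p/\tau$ vertices), then halve $\tau$ and re-solve the LP on the residual graph. Once the short holes have been destroyed this way, a structural lemma for ``almost-chordal'' graphs should provide a $\poly(p)$-sized set of vertices (guided, for instance, by minimal clique separators in the chordal parts of the residual graph) whose removal simultaneously eliminates all remaining long holes. Summing the costs over the $O(\log p)$ phases, together with the cost of the final structural clean-up, should yield the claimed bound $|X| = O(p^4 \tlog{p}[2])$.

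The main obstacle will be the structural lemma governing long holes in the residual graph: one must show that, once short holes are gone, the surviving long holes overlap enough that a $\poly(p)$-sized union of neighborhoods suffices to hit them all. The two logarithmic factors then arise respectively from the $O(\log p)$ phases of geometric thresholding, and from a refined LP-rounding analysis that replaces the standard $O(\log n)$ hitting-set integrality gap by $O(\log p)$, using that a basic feasible LP solution has only $\poly(p)$ nonzero coordinates whenever the LP optimum is at most $p$.
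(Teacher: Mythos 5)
This lemma is not proved in the paper at all; it is imported verbatim from Jansen and Pilipczuk~\cite{JansenP18}, who obtain it as a byproduct of their polynomial kernel for \textsc{Chordal Vertex Deletion}. Their argument is combinatorial, built around clique trees, minimal separators, and an Erd\H{o}s--P\'osa-style analysis of how holes can intersect, with the two logarithmic factors arising from a balanced-separator recursion and a covering argument. Your sketch takes an entirely different, LP-based route, and it has gaps that, as written, I do not see how to close.

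The most immediate problem is the separation oracle. Solving the hole-hitting LP via the ellipsoid method requires, given fractional weights $x_v$, either to exhibit a hole $C$ with $\sum_{v \in C} x_v < 1$ or to certify that none exists; this is exactly the \emph{minimum-weight hole} problem. Maximum cardinality search does not do this: it either certifies chordality or returns \emph{some} hole, with no control over its weight. A cheap hole may still exist even when the one MCS reports has weight $\geq 1$. Whether minimum-weight induced cycles of length at least four can be found in polynomial time is a nontrivial question in its own right, and your proof currently just asserts it. Relatedly, your claim that a basic feasible solution has only $\poly(p)$ nonzero coordinates whenever the LP optimum is at most $p$ is not a general fact about LPs with exponentially many constraints; a vertex of the polytope is pinned down by $n$ tight constraints, but that does not force most coordinates to vanish, and the LP value bounding the $\ell_1$-norm by $p$ does not bound the support.

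Beyond that, you explicitly flag the ``structural lemma governing long holes'' as the main obstacle and do not prove it. That lemma is not a cleanup step: it is precisely the place where the genuine combinatorial difficulty of chordal deletion lives, and it is what the bulk of~\cite{JansenP18} is devoted to. Without it, the iterative-thresholding phases are not known to terminate with a bounded residual, and the claimed $O(p^4 \tlog{p}[2])$ bound has no derivation. So the proposal is best read as a plausible-sounding plan rather than a proof, and it does not reflect how the cited result is actually established.
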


\begin{lemma}
\Clique{} admits an~$O((q+k)^2)$-size \ORK{} with at most $n_G$ queries, where~$q=O(p^4\tlog{p}[2])$ and~$p$ denotes the distance of the input graph~$G$ to chordal graphs.
\end{lemma}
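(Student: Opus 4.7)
The plan is to strip off the ``non-chordal'' part of $G$ using \cref{lem:chordal} and then exploit a perfect elimination ordering (PEO) of the chordal remainder to localise every potential $k$-clique inside a small neighbourhood that includes all of $X$. Specifically, I would apply \cref{lem:chordal} with the given $p$ to obtain in polynomial time a set $X \subseteq V(G)$ of size $q = O(p^4 \tlog{p}[2])$ such that $H \ceq G - X$ is chordal, and then compute a PEO $\sigma = (v_1,\dots,v_{n'})$ of $H$ in linear time (e.g.\ via LexBFS). Recall that for every $v_i$ the set $N_\sigma^+(v_i) \ceq N_H(v_i) \cap \{v_{i+1},\dots,v_{n'}\}$ of higher-ordered neighbours induces a clique in $H$.

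For each $v \in V(H)$ the kernel will output the query $(G[X \cup \{v\} \cup N_\sigma^+(v)], k)$. To keep the query sizes under control I would first test whether some $v \in V(H)$ satisfies $|N_\sigma^+(v)| \ge k-1$; if so, $\{v\} \cup N_\sigma^+(v)$ already contains a $k$-clique of $H \subseteq G$, and I would output a single trivial yes-instance (e.g.\ a one-vertex graph asking for a $1$-clique) and halt. Otherwise, every query has at most $|X| + 1 + |N_\sigma^+(v)| \le q + k$ vertices and hence $O((q+k)^2)$ edges, and the total number of queries is at most $|V(H)| \le n_G$.

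The correctness of the OR would be argued as follows. Consider any $k$-clique $K$ in $G$. If $K \subseteq X$ then $K$ is contained in every output query, since each query includes all of $X$ (the degenerate case $V(H) = \emptyset$ means $G = G[X]$ and can be solved directly in polynomial time, as $|X| \le q$ is a polynomial in the parameter). Otherwise, let $v$ be the PEO-minimum vertex of $K \cap V(H)$; every other vertex of $K \cap V(H)$ is adjacent to $v$ in $H$ and has larger PEO index, so it lies in $N_\sigma^+(v)$, and hence $K \subseteq X \cup \{v\} \cup N_\sigma^+(v)$. The reverse direction is immediate since every query is an induced subgraph of $G$.

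The main subtlety, and the only place where the argument could fail, is that $|N_\sigma^+(v)|$ could a priori be as large as $n_G$, blowing up the query size. The preliminary shortcut above handles this precisely because higher-ordered PEO neighbourhoods are themselves cliques in the chordal graph $H$: a large $N_\sigma^+(v)$ immediately certifies a $k$-clique, so there is nothing to do. Everything else (computing $X$, the PEO, and the queries) is standard polynomial-time preprocessing.
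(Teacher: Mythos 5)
Your proof takes essentially the same approach as the paper: strip off a distance-to-chordal modulator $X$ via \cref{lem:chordal}, compute a perfect elimination ordering of $H = G-X$, and query $G[X \cup N_\sigma^+[v]]$ for each $v \in V(H)$, using the fact that any clique meeting $H$ has a PEO-minimum vertex whose higher-ordered neighbourhood captures the rest. You are in fact slightly more explicit than the paper about why each query has at most $q+k$ vertices (shortcutting when some $|N_\sigma^+(v)| \ge k-1$, which already certifies a $k$-clique since higher-ordered PEO neighbourhoods are cliques in $H$), a point the paper asserts without spelling out.
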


\begin{proof}
Let~$(G,k)$ be an instance of~\prob{Clique}, and let~$p$ denote the distance of $G$ to chordal graphs. Apply~\cref{lem:chordal} to obtain a set~$X$ of size~$q\in\O(p^4 \tlog{p}[2])$ such that $H= G-X$ is chordal. We compute a perfect elimination ordering~$\sigma$ of~$H$ in polynomial time~\cite{HabibMPV00}. That is, an ordering of the vertices of $H$ such that for each vertex along with its higher indexed neighbors induce a clique. Let~$N_\sigma^+(v)$ denote the neighborhood of~$v$ restricted to vertices which are ordered after~$v$ in~$\sigma$, and let~$N^+[v]\ceq N_\sigma^+(v)\cup\{v\}$. We query each instance $(G_v,k)$, where~$G_v=G[X\cup N_\sigma^+[v]]$. Clearly, $G$ has a $k$-clique if and only if $G_v$ has a $k$-clique for some $v \in V(H)$. Moreover, each query contains at most~$|X|+k$ many vertices.
In total, there are at most~$n_G$ output instances each of size at most~$(|X|+k)^2$.
\end{proof}

\begin{corollary}
For any~$c\in\N$, \Clique{} admits a polynomial-size \ORK{} with $n_G/\tlog{n_G}[c]$ queries, 
where~$p$ denotes the distance of the input graph~$G$ to chordal graphs.
\end{corollary}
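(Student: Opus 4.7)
The plan is to apply \cref{lem:logshave} to the \ORK{} constructed in the preceding lemma, with parameter $p+k$ (distance to chordal plus clique size). This follows the same template that has just been executed for degeneracy, odd cycle transversal, and distance to bounded degree.

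First I would verify the two hypotheses of \cref{lem:logshave}. Condition (i), a trivial OR composition for \prob{Clique}, follows from \cref{obs:cliquecompose}: the disjoint union of input instances is an equivalent instance, after padding to a common clique size $k$ using apex vertices. The padding must be chosen so as not to blow up the parameter, but since adding an apex vertex increases $p$ and $k$ each by at most one per instance, a careful (standard) padding keeps $p+k$ polynomially bounded. Condition (ii), solvability of \prob{Clique} in $2^{\poly(p+k)}\cdot\poly(n)$ time, I would obtain as follows: compute a chordal deletion set $X$ of size $q=O(p^4\tlog{p}[2])$ via \cref{lem:chordal}, and then branch on the subset $S\subseteq X$ of $X$ that lies in the sought $k$-clique; for each of the $2^{|X|}$ branches, the remaining task is to find a $(k-|S|)$-clique in the chordal subgraph induced by the common neighborhood of $S$ in $G-X$, which is polynomial-time solvable since \prob{Clique} on chordal graphs is in P. The overall running time is $2^{O(p^4\tlog{p}[2])}\cdot\poly(n) = 2^{\poly(p+k)}\cdot\poly(n)$.

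With both hypotheses in place, invoking \cref{lem:logshave} on the $n_G$-query \ORK{} from the previous lemma immediately produces, for every $c\in\mathbb{N}$, a polynomial-size \ORK{} using at most $\lceil n_G/\tlog{n_G}[c]\rceil$ queries, establishing the corollary. The only mildly delicate point is the apex-padding step in condition (i) — chordality must be preserved and the parameter kept polynomial — but this is a routine preprocessing that has been handled the same way in the earlier cases of this section.
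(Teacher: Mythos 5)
Your proposal is correct and follows essentially the same route as the paper: invoke \cref{lem:logshave} on the $n_G$-query \ORK{} from the preceding lemma, with parameter $p+k$, after checking the two hypotheses. The paper simply asserts the $2^{\poly(p+k)}\cdot\poly(n)$-time solvability for all parameters in \cref{thm:UpperBound} without proof; your subset-branching argument over the approximate chordal deletion set is a valid way to verify it, though the more uniform choice used implicitly elsewhere in the paper is to exhaustively search each of the $n_G$ queries (size $O((q+k)^2)$) produced by the \ORK{} itself, which gives the same bound and avoids reproving anything. Also, the apex-padding concern is moot here since all queries $(G_v,k)$ emitted by the preceding lemma already share the same $k$.
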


\subsection{Length of a longest odd cycle}

We next show a $p^{O(1)}$-size \ORK{} for \prob{Clique} with a linear number of queries, where $p$ is the length of the longest odd cycle in the input graph $G$. For this, we will need the notion of tree decompositions.
\begin{definition}[Tree Decomposition]
A \emph{tree decomposition} of a graph $G$ is a tree $\mathcal{T}=(\mathcal{X},F)$ whose nodes $X \in \mathcal{X}$ (called \emph{bags}) are subsets of vertices of $G$ satisfying:
\begin{inparaenum}[(i)]
\item $G= \bigcup_{X \in \mathcal{X}} G[X]$,
\item for every edge~$e= \{v, w\} \in E(G)$, we have $\{v, w\} \subseteq X$ for some~$X\in \mathcal{X}$, and
\item the set of bags $\mathcal{X}_v = \{ X \in \mathcal{X} : v \in X \}$, for every $v \in V(G)$, induces a connected subtree in~$\mathcal{T}$. 
\end{inparaenum}
The \emph{width} of the decomposition is $\max_{x \in \mathcal{X}} |X| - 1$. The \emph{treewidth} of $G$ is the minimum width of any tree decomposition of $G$. 
\end{definition}

Fomin~\emph{et al.}~\cite{FominLSPW18} showed how to compute in polynomial-time a tree decomposition of width $O(\omega^2)$ of a graph of treewidth~$\omega$. Let $G$ be a graph, and $\mathcal{T}$ be a tree decomposition of~$G$. We will use the following two well known facts about $\mathcal{T}$ (see \emph{e.g.}~\cite{FlumG06}): First, any clique of~$G$ is completely contained in some bag of $\mathcal{T}$. Second, in linear time, we can modify $\mathcal{T}$ so it has at most~$|V(G)|$ many bags.
Note that the first fact implies a $2^{\omega} \poly (n_G)$ time algorithm for \prob{Clique}, where $\omega$ is the width of $\mathcal{T}$, by exhaustively searching each of the bags of $\mathcal{T}$.
A \emph{cutvertex} in a graph is a vertex whose removal leaves the graph with more connected components. A connected graph with at least three vertices and no cutvertex is called \emph{2-connected}.
A \emph{block} of a graph~$G$ is a maximal 2-connected subgraph $G$.
A \emph{block decomposition} of a graph is the set of all its blocks.
Clearly, there are~$O(n_G)$ many blocks in a block decomposition, and any non-trivial clique of~$G$ is completely contained inside one of its blocks. We using the following result by Panolan and Rai~\cite{PanolanR12}:

\begin{lemma}[\cite{PanolanR12}]
\label{lem:loc}
Let~$G$ be a graph with no odd cycle of length greater than~$p$. Then, in~$O(n_G^2)$ time, one can compute the block decomposition of $G$, and each block of~$G$ is either
\begin{inparaenum}[(i)]
\item bipartite, or 
\item non-bipartite, 2-connected, and of treewidth at most~$2p$.
\end{inparaenum}    
\end{lemma}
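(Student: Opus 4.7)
First, I would compute the block decomposition of $G$ in $\O(n_G + m_G) \le \O(n_G^2)$ time by the classical DFS-based algorithm of Hopcroft and Tarjan, and in linear time per block I would decide bipartiteness via a BFS 2-coloring. If a block is bipartite we are in case~(i); otherwise the block contains an odd cycle, has at least three vertices, and is therefore 2-connected by the very definition of a block, making it a candidate for case~(ii).

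The heart of the argument is showing that a 2-connected, non-bipartite block $B$ whose longest odd cycle has length at most $p$ satisfies $\tw(B) \le 2p$. My plan is to use a shortest odd cycle as a ``backbone''. Let $C_0$ be a shortest odd cycle of $B$; it has length $\ell_0 \le p$. I would first prove that every vertex of $B$ lies within distance $(p-1)/2$ of $V(C_0)$: by 2-connectivity of $B$ and (the fan version of) Menger's theorem, any vertex $v \notin V(C_0)$ admits two internally vertex-disjoint paths $P_1, P_2$ to distinct vertices $c_1, c_2$ of $C_0$. The two cycles obtained by joining $P_1, P_2$ with the two arcs of $C_0$ between $c_1$ and $c_2$ have opposite parities (the two arcs sum to the odd number $\ell_0$), so exactly one of these cycles is odd. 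By assumption this odd cycle has length at most $p$, so $|P_1|+|P_2| \le p-1$ and $\dist(v, V(C_0)) \le (p-1)/2$.

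The same parity trick controls ear lengths. Consider an open ear decomposition $C_0, P_1, \dots, P_m$ of $B$, which exists since $B$ is 2-connected. At the moment $P_i$ is added, the current subgraph $G_{i-1}$ is itself 2-connected (ear extensions preserve 2-connectivity) and non-bipartite (it contains $C_0$); a standard lemma then says that the two endpoints $u, v$ of $P_i$ are joined in $G_{i-1}$ by $u$--$v$ paths of both parities, so there is a $u$--$v$ path $Q$ in $G_{i-1}$ whose parity is opposite to $|P_i|$. Then $P_i \cup Q$ is an odd cycle of $B$ of length at most $p$, which forces $|P_i| \le p-1$. From this bounded-length ear decomposition I would assemble a tree decomposition of $B$: place $V(C_0)$ ($\le p$ vertices) in the root bag, and for each ear $P_i$ attach a path of bags along its internal vertices, every such bag containing the two endpoints of $P_i$ together with a short sliding window along $P_i$. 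Using $|P_i| \le p-1$ and $|V(C_0)| \le p$, the bags can be kept of size at most $2p+1$, yielding $\tw(B) \le 2p$.

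The main obstacle I expect is this last conversion step. When a later ear has its endpoints in the interior of an earlier ear rather than on $C_0$, naïvely propagating those endpoints through the tree decomposition to satisfy the running intersection property can blow up the bag sizes beyond $2p$. To avoid this I may need a more careful construction, for instance re-rooting the decomposition for each ear at a common ancestor of its endpoints, or alternatively bypassing the ear decomposition entirely and building the tree decomposition directly from the BFS layering from $C_0$ (using the distance bound proved in the second paragraph), while keeping $V(C_0)$ in every bag. Pinning down the precise constant $2p$ rather than $\O(p)$ is the delicate part of the argument and the step I would devote the most effort to.
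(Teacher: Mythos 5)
The paper does not prove this lemma; it cites Panolan and Rai~\cite{PanolanR12}, so there is no in-paper argument to compare against, only the correctness of your sketch.

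Your preliminary facts are all correct and cleanly argued: the distance bound $\dist(v,V(C_0))\le(p-1)/2$ via the fan version of Menger plus the parity trick on the two arcs of $C_0$, and the bound $|P_i|\le p-1$ on ear lengths via the standard ``two parities of paths in a 2-connected non-bipartite graph'' lemma. But the gap you flag in your last paragraph is not merely a delicate constant issue --- it is fatal for this route. Bounded ear length does not imply bounded treewidth: the $n\times n$ grid has an open ear decomposition starting from a $4$-cycle in which every ear has length at most~$3$ (grow one cell at a time), yet its treewidth is $n$. So no ``sliding window along each ear'' construction, however cleverly re-rooted, can succeed on the strength of the ear-length bound alone. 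Likewise, bounded radius from $C_0$ does not by itself bound treewidth (a small clique or biclique hanging off $C_0$ has radius~$1$ from it and huge treewidth); in such examples the odd-cycle hypothesis happens to fail, but your sketch never invokes the hypothesis at the point where it would have to. In short, both of your intermediate lemmas are true yet jointly too weak to imply the conclusion, and no purely local patch will fix this.

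The missing idea is to bound the \emph{circumference} of~$B$, not the ear lengths, and then invoke Birmel\'e's theorem that treewidth is at most circumference minus one. Your parity+Menger trick is exactly the right tool, but it should be aimed at an arbitrary cycle $C$ of $B$ rather than at ears of an ear decomposition: given an even cycle $C$ of length $2q$, one shows (using 2-connectivity and non-bipartiteness) that there are $u,v\in V(C)$ joined by a path $Q$ internally disjoint from $C$ whose parity is opposite to that of the $C$-arcs between $u$ and $v$; then both cycles $Q\cup\text{arc}$ are odd, and the longer one has length at least $q+1$, giving $q\le p-1$ and hence circumference at most $\max(p,2p-2)\le 2p$. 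Applying Birmel\'e then yields $\tw(B)\le 2p-1\le 2p$. That is a genuinely different argument than the tree-decomposition-from-ears construction you sketched, and it is the one that actually goes through.
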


Let $B=\{B_1,\dots,B_b\}$ be the block decomposition of a graph $G$, and let $C(V)$ denote the set of cutvertices of $G$. Observe that any vertex of $V(G) \setminus C(V)$ appears in exactly one block. Moreover, any cutvertex $v$ appears in at most~$\deg(v)$ many blocks. Thus, we have 
$
\sum_{i=1}^b |B_i| \leq n_G + \sum_{v \in C(V)} \deg(v) \leq n_G +2m_G.
$
We will use this fact for our \ORK{}: 

\begin{lemma}
\label{lem:locork}
\Clique{} admits an~$O(p^4)$-size \ORK{} with at most $n_G +2m_G$ queries, where~$p$ denotes the length of the longest odd cycle in the input graph~$G$.
\end{lemma}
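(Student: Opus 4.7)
The plan is to combine the block decomposition provided by \Cref{lem:loc} with a low-width tree decomposition of each non-bipartite block, and then query the induced subgraphs of the bags.

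First, I would dispose of the trivial cases $k \le 2$: a $1$-clique or $2$-clique can be detected in polynomial time directly, so we can output a single constant-size query that reflects the answer. From now on assume $k\ge 3$, which is convenient because every $k$-clique is a $2$-connected subgraph and hence is entirely contained in a single block of $G$. In~$O(n_G^2)$ time I would invoke \cref{lem:loc} to compute the block decomposition $B_1,\dots,B_b$ of~$G$ and a classification of each block as bipartite or non-bipartite. Since a bipartite graph contains no triangle, for $k\ge 3$ we may discard every bipartite block without losing any $k$-clique; we are left with the non-bipartite blocks, each of which is $2$-connected and of treewidth at most~$2p$.

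For each surviving non-bipartite block $B_i$ I would run the polynomial-time algorithm of Fomin~\emph{et al.}~\cite{FominLSPW18} to compute a tree decomposition $\mathcal{T}_i$ of $B_i$ of width $O(p^2)$, and then trim $\mathcal{T}_i$ to have at most $|V(B_i)|$ bags (using the standard linear-time modification). For every bag $X$ occurring in any~$\mathcal{T}_i$, I would emit the query $(G[X],k)$. Since $|X|=O(p^2)$, each query has $O(p^2)$ vertices and $O(p^4)$ edges, meeting the required size bound. The two standard facts about tree decompositions guarantee that every clique of $B_i$ is fully contained in some bag of $\mathcal{T}_i$, so $G$ contains a $k$-clique if and only if some queried $G[X]$ contains a $k$-clique; together with the discarding of bipartite blocks (which contain no triangle), this establishes the OR-property.

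It remains to count the queries. The number of bags across all trimmed decompositions is at most $\sum_{i=1}^b |V(B_i)|$. Every non-cutvertex lies in exactly one block, and every cutvertex $v$ lies in at most $\deg(v)$ blocks, giving
\[
\sum_{i=1}^b |V(B_i)| \;\le\; n_G + \sum_{v\in C(V)} \deg(v) \;\le\; n_G + 2m_G,
\]
as desired. The only part that requires any care is the bookkeeping of the block/tree-decomposition interaction and the handling of the $k\le 2$ corner case; the approximation or degeneracy machinery used in earlier subsections is not needed here because \cref{lem:loc} already delivers a genuine (not approximate) treewidth bound on the non-bipartite blocks.
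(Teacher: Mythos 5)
Your proof is correct and follows essentially the same route as the paper: invoke \cref{lem:loc} for the block decomposition, discard bipartite blocks for $k\ge 3$, compute $O(p^2)$-width tree decompositions of the non-bipartite blocks via Fomin~et~al., query each bag, and bound the total bag count by $\sum_i |V(B_i)| \le n_G + 2m_G$. The only (cosmetic) difference is that you explicitly dispatch $k\le 2$ with a constant-size query, whereas the paper simply assumes $k>2$ without loss of generality.
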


\begin{proof}
Let~$(G,k)$ be an instance of \prob{Clique}, and let~$p$ denote the length of the longest odd cycle in~$G$. We can assume w.l.o.g. that $k > 2$. Apply~\cref{lem:loc} to obtain a block decomposition~$B=\{B_1,\dots,B_b\}$, where~$B_i=(W_i,F_i)$ with~$W_i\subseteq V(G)$, $F_i\subseteq E(G)$, and~$\bigcup_{i\in\set{b}} B_i = G$. For each~$i\in\set{b}$, check in polynomial time whether~$B_i$ is bipartite, and if not, compute a tree decomposition $\mathcal{T}_i=(\mathcal{X}_i,F_i)$ of width~$O(p^2)$ for $B_i$ (using~\cite{FominLSPW18}). If necessary, in linear time, modify $\mathcal{T}_i$ so that it consists of at most~$|B_i|$ bags. Then, construct a query $G[X]$ for each bag $X \in \mathcal{X}_i$. By the above discussion, any $k$-clique of $G$ is completely contained in some bag of some non-bipartite block of $G$. Thus, $G$ has a $k$-clique if and only if $G[X]$ for some $X \in \mathcal{X}_i$ and some $i \in \set{b}$.
Moreover, the number of queries is bounded by $\sum_i |B_i|$, which according to the above is bounded by $n_G + 2m_G$.
Thus, in total, we have at most~$n_G +2m_G$ queries, each of size~$O(p^4)$.
\end{proof}

\begin{corollary}
For any~$c\in\N$, \Clique{} admits a polynomial-size \ORK{} with $m_G/\tlog{m_G}[c]$ queries, where~$p$ denotes the denote the length of the longest odd cycle in the input graph~$G$.
\end{corollary}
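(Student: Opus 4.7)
The plan is to obtain the corollary by plugging the OR-kernel of \cref{lem:locork} into the query-batching lemma (\cref{lem:logshave}), in the same style that \cref{lem:logshave} was applied to the degeneracy, odd cycle transversal, distance to bounded degree, and distance to chordal parameters earlier in this section. In particular, \cref{lem:locork} already gives a polynomial-size \ORK{} with~$O(n_G+m_G)=O(m_G)$ queries (after discarding isolated vertices, which cannot participate in any clique of size at least two and do not change the parameter~$p$, we have $n_G\leq 2m_G$). It remains to verify the two hypotheses of \cref{lem:logshave} and to check that the logarithmic savings are against~$\log m_G$ rather than against the total encoding length~$n$.

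For the first hypothesis, the trivial OR composition from \cref{obs:cliquecompose} applies: taking the disjoint union of several graphs preserves the length of the longest odd cycle (since new odd cycles are not introduced), so the parameter of the composed instance is the maximum of the parameters of the input instances, and we may assume without loss of generality that all queries search for a clique of the same size by padding with apex vertices. For the second hypothesis, I need \prob{Clique} to be solvable in~$2^{\poly(p)}\poly(n)$ time when~$p$ is the length of the longest odd cycle. This follows immediately from \cref{lem:loc}: compute the block decomposition in polynomial time; each block is either bipartite (in which case the maximum clique has size at most two and the case can be handled directly) or admits a tree decomposition of width~$O(p)$ that can be computed in polynomial time, on which standard dynamic programming solves \prob{Clique} in~$2^{O(p)}\poly(n_G)$ time. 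Since~$G$ has a $k$-clique iff some block does, the overall running time is~$2^{O(p)}\poly(n_G)$.

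With both hypotheses verified, \cref{lem:logshave} converts the \ORK{} of \cref{lem:locork} with~$q(I)=O(m_G)$ queries into a polynomial-size \ORK{} with~$\lceil q(I)/\log^c(n)\rceil=O(m_G/\log^c(n))$ queries, for every constant~$c\in\N$. Finally, since the encoding length~$n$ of~$(G,k)$ satisfies~$n=\Omega(m_G)$ (the adjacency-list representation spends at least one bit per edge), we have~$\log n=\Omega(\log m_G)$, and therefore~$m_G/\log^c(n)=O(m_G/\log^c(m_G))$, which is the bound claimed. I do not foresee any real obstacle: the main thing to be careful about is the bookkeeping that~$n_G+2m_G=O(m_G)$ after removing isolated vertices and that the~$\log n$ in \cref{lem:logshave} can be exchanged for~$\log m_G$ up to constants; both are immediate.
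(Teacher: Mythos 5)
Your proof is correct and takes the same route the paper intends: plug the \ORK{} of \cref{lem:locork} into \cref{lem:logshave} after checking its two hypotheses (trivial OR composition via disjoint union, and a $2^{O(p)}\poly(n)$-time algorithm via the block/tree decomposition of \cref{lem:loc}), then note $n_G+2m_G=O(m_G)$ after discarding isolated vertices and $\log n=\Omega(\log m_G)$. One small remark: the apex-vertex padding you invoke to equalize $k$ is unnecessary here, because all queries produced by \cref{lem:locork} already share the same $k$ --- which is fortunate, since adding an apex vertex does \emph{not} preserve the length of the longest odd cycle (an apex turns any long even cycle into a longer odd one), so that step would not be harmless if it were actually needed.
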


\section{Conclusion}

In this paper we showed that \Clique{} under several parameterizations admits polynomial Turing kernels (in fact, OR-kernels) with $O(n/\tlog{n}[c])$ queries, yet it does not admit such kernels with $O(n^{1-\varepsilon})$ queries. For our upper bounds, we used several known kernelization techniques for \Clique{} along with a new simple query batching technique. To obtain our lower bounds, we adapted the standard Karp kernelization lower bounds framework~\cite{BodlaenderDFH09}, by employing a few new tricks and ideas. The natural future direction of research is to see if our framework can be applied to further problems, perhaps with a few extra adjustments. Moreover, it will be interesting to identify natural problems without polynomial Karp kernels that have polynomial Turing kernels using only a few number of queries.

\bibliographystyle{abbrvnat}
\bibliography{tk-queries-bib}

\end{document}